\providecommand{\U}[1]{\protect\rule{.1in}{.1in}}
\newcommand{\cmmnt}[1]{}
\newtheorem{theorem}{Theorem}
\newtheorem{corollary}{Corollary}
\newtheorem{definition}{Definition}
\newtheorem{lemma}{Lemma}
\newtheorem{proposition}{Proposition}
\newtheorem{remark}{Remark}
\newenvironment{proof}[1][Proof]{\noindent\textbf{#1.} }{\ \rule{0.5em}{0.5em}}
\begin{document}
\title[ ]{No-go theorem for heralded exact one-way  key distillation}

\author{Vishal Singh}
\affiliation{School of Applied and Engineering Physics, Cornell University, Ithaca, New York 14850, USA}
\author{Mark M.~Wilde}
\affiliation{School of Electrical and Computer Engineering, Cornell University, Ithaca, New York 14850, USA}


\begin{abstract}
    The heralded exact one-way distillable secret key is equal to the largest expected rate at which perfect secret key bits can be probabilistically distilled from a bipartite state by means of local operations and one-way classical communication. Here we define the set of super two-extendible states and prove that an arbitrary  state in this set  cannot be used for heralded exact one-way secret-key distillation. This broad class of states includes both erased states and all full-rank states. Comparing the heralded exact one-way distillable secret key with the more commonly studied approximate one-way distillable secret key, our results demonstrate an extreme gap between them for many states of interest, with the approximate one-way distillable secret key being much larger. Our findings naturally extend to heralded exact one-way entanglement distillation, with similar conclusions. 
\end{abstract}
\maketitle

\section{Introduction}

Quantum key distribution has emerged as one of the most promising applications of a quantum network, as it facilitates unconditionally secure communication between distant parties~\cite{RevModPhys.92.025002,RevModPhys.94.025008,zapatero2023advances}. It allows the transmission of private data between multiple parties, such that the security is ensured by the laws of quantum mechanics~\cite{BB84, Bennett2014, Ekert91}, instead of relying on  computational assumptions about the eavesdropper~\cite{katz2007introduction}. The rapid development of quantum network technologies demands a strong understanding of our ability to distribute a secret key over a quantum network equipped with some available resources. 

Entanglement is a major ingredient that ensures quantum mechanically secure communication~\cite{Ekert91}, and it is in fact necessary as well, in the sense argued in~\cite{CLL04}. However, the ability of a quantum state to establish a secret key is not a trivial consequence of its entanglement content. Indeed, the seminal work of~\cite{HHHO05, HHHO09} established the existence of bound entangled states~\cite{HHH98} that furnish a secret key upon local measurements. This motivates a separate discussion of the privacy content in a quantum state that is clearly distinct from its entanglement content.

The distillation of secret key from a quantum state under a restricted set of operations has garnered interest~\cite{DW2005, HHHO05, Christandl06, CEHHOR07, HHHLO08, HHHO09, CSW12}, due to its practical and foundational significance in quantum information science and, particularly, quantum privacy~\cite{WTB17, QSW18}. Despite the differences, the theory of entanglement is intimately linked with quantum privacy, and a deeper understanding of one can reveal insights into the other. A specific task of interest is the distillation of secret key under local operations and one-way classical communication, abbreviated as one-way LOCC, due to its physically motivated setting and relation with the private capacity of quantum channels~\cite{CWY04, Dev05}.

In this paper, we analyze the heralded exact one-way secret-key distillation  approach
in which a perfect secret key is distilled from an initial bipartite state, albeit probabilistically. We also refer to this setting as probabilistic secret-key distillation with zero error, and we adopt the latter terminology henceforth. In particular, we study the one-way distillable secret key of a bipartite state in the
probabilistic setting, which is roughly defined as the maximum achievable expected rate of distilling secret key bits from an arbitrarily large number
of copies of the state using one-way LOCC channels. 

Sometimes the term \emph{probabilistic resource distillation} is used for probabilistic approximate distillation of a resource in the asymptotic setting, for example in~\cite{Regula22,Regula22b}. In that setting, a sequence of probabilistic protocols is considered a valid probabilistic resource distillation protocol if the sequence of distilled states, when the protocol is successful, converges to the maximal-resource state asymptotically. We emphasize that we use the term \emph{probabilistic key distillation} in a more strict sense, where the distillation protocol must yield a perfect secret key whenever the protocol is successful, even if it requires an arbitrarily large number of resource states.

The particular contributions of our paper are as follows. We first establish the definition of probabilistic one-way distillable secret key, which is fundamentally different from approximate one-way distillable secret key~\cite{DW2005}. We find a set of states,  called the set of \textit{super two-extendible} states, which have no probabilistic one-way distillable secret key and can also be described via semidefinite constraints. Using the examples of erased states and full-rank states, we show that there exists an extreme gap between the probabilistic one-way distillable secret key and the approximate one-way distillable secret key for several states, with the former  being equal to zero while the latter is strictly non-zero for these examples. 

Our results establish fundamental limitations on probabilistic secret-key distillation, and consequently on probabilistic entanglement distillation, under one-way LOCC channels. The class of super two-extendible states provides a computationally-friendly framework for analyzing secret-key distillation in a resource-theoretic setting, due to its semidefinite characterization. Furthermore, our results emphasize the importance of allowing some error in secret-key distillation, as doing so can facilitate key distillation from otherwise undistillable states. 

One of the main tools that we use to establish our results is the resource theory of unextendibility~\cite{KDWW19, KDWW21} and its state-dependent variation~\cite{WWW21}. This resource theory was developed in~\cite{KDWW19, KDWW21} as a relaxation of the resource theory of entanglement, in which one-way LOCC channels are allowed for free. An important quantity that we employ is the min-unextendible entanglement of a bipartite state~\cite{WWW21}. Our work also provides significant improvements on existing bounds~\cite[Sec.~V.D]{WWW21} regarding the overhead of probabilistic one-way secret key distillation. 

\section{One-way secret-key distillation}

Let us begin by considering the task of secret-key distillation. The objective of such a  protocol $\mathcal{L}_{AB\to A'B'}$ is to transform a bipartite state~$\rho_{AB}$, purified by $\psi_{ABE}$, into a tripartite key state as follows:
\begin{equation}
    \mathcal{L}_{AB\to A'B'}\!\left(\psi_{ABE}\right) = \frac{1}{k} \sum_{i=0}^{k-1} |i\rangle\!\langle i|_{A'}\otimes |i\rangle\!\langle i|_{B'} \otimes \sigma_E,
\end{equation}
where $\sigma_E$ is an arbitrary quantum state. Alice and Bob, holding systems $A'$ and $B'$ respectively, can use the classically correlated state shared between them to communicate a message of $\log_2 k$ bits using the one-time-pad scheme. Any eavesdropper holding the system $E$ cannot decipher anything about the message because $\sigma_E$ is independent of the symbol $i$, hence, ensuring secrecy of the communication. 

In~\cite{HHHO05, HHHO09}, it has been shown that the distillation of a secret key is equivalent to the distillation of a bipartite private state of the following form:
\begin{equation}\label{eq:bip_private_state_defn}
    \gamma^k_{A_0A_1B_0B_1} \coloneqq \frac{1}{k}\sum_{i,j=0}^{k-1} |ii\rangle\!\langle jj|_{A_0B_0}\otimes U_i\omega_{A_1B_1}U_j^{\dagger},
\end{equation}
where $\omega_{A_1B_1}$ is an arbitrary quantum state and $\left(U_i\right)_{i=0}^{k-1}$ is a tuple of unitary operators. The systems $A_0B_0$ are the key systems, and $A_1B_1$ are the shield systems. The bipartite private state defined in \eqref{eq:bip_private_state_defn} can be used to distill at least $\log_2 k$ secret key bits. We can hence reframe the task of secret key distillation into the distillation of bipartite private states. From here on we simplify the labeling for the key systems and shield systems: when referring to a private state $\gamma^k_{A_0A_1B_0B_1}$, we group the systems held by Alice into a single system label $A\coloneqq A_0A_1$ and all the systems held by Bob into a single system label $B\coloneqq B_0B_1$.

In a probabilistic one-way secret-key distillation protocol, Alice and Bob use local operations and one-way classical communication from Alice to Bob to distill a secret key, or equivalently a bipartite private state, from a shared resource state $\rho_{AB}$ with some probability $p\in [0,1]$. The distillation process can be mathematically described as the action of a one-way LOCC channel $\mathcal{L}^{\to}_{AB\to XA'B'}$ on the resource state $\rho_{AB}$ as follows:
\begin{equation}\label{eq:private_st_distillation}
    \mathcal{L}^{\to}_{AB\to XA'B'}\!\left(\rho_{AB}\right) = p[1]_X\otimes \gamma^k_{A'B'} + (1-p)[0]_X\otimes \sigma_{A'B'},
\end{equation}
where we have used the shorthand
\begin{equation}
[i]\coloneqq |i\rangle\!\langle i|.    
\end{equation}
In \eqref{eq:private_st_distillation}, $\gamma^k_{A'B'}$ is a bipartite private state with at least $\log_2 k$ bits of secrecy, system $X$ is a classical flag indicating the success or failure of the protocol, and $\sigma_{A'B'}$ is an arbitrary quantum state generated when the protocol fails to distill a private state.  

In an arbitrary probabilistic secret-key distillation protocol, both parties must have access to the flag~$X$ in order to use the distilled key for private communication. Suppose that a probabilistic secret-key distillation protocol fails to establish a secret key. In that case, both parties involved in the distillation process can discard their systems and repeat the protocol with another instance of the resource state. If one-way LOCC channels are available for free, it suffices to demand that Alice receives the flag $X$ since she can send the flag to Bob using the freely available forward classical channel.

We can now quantify the resource in a bipartite state that is relevant for the task of secret-key distillation using one-way LOCC channels. For this purpose, we define the probabilistic one-way distillable secret key of a bipartite state as follows:
\begin{definition}
    The probabilistic one-way distillable secret key of a bipartite state $\rho_{AB}$ is the maximum expected rate at which secret key bits can be distilled from a bipartite state using one-way LOCC channels. It is formally defined as
    \begin{equation}
        K_{D}^{\to}(\rho_{AB}) \coloneqq \liminf_{n\to \infty} \frac{1}{n} K_{D}^{(1),\to}(\rho_{AB}^{\otimes n}),
    \end{equation}
    where the one-shot probabilistic one-way distillable secret key is defined as
    \begin{multline}\label{eq:prob_1W_distill_key_defn}
        K_{D}^{(1),\to}(\rho_{AB}) \coloneqq\\
        \sup_{\substack{p \in [0,1], k\in \mathbb{N}\\ \mathcal{L}^{\to} \in \operatorname{1WL}, \gamma^k_{A'B'}}}\left\{
        \begin{array}{c}
             p \log_2 k:\\
             \mathcal{L}^{\to}\!\left(\rho_{AB}\right) =
             p[1]_{X_A}\otimes\gamma^k_{A'B'} \\
             + (1-p)[0]_{X_A}\otimes\frac{I_{A'B'}}{d_A'd_B'}
        \end{array}
        \right\}.
    \end{multline}
    In the above, $\mathcal{L}^{\to}_{AB\to X_A A'B'}$ is a one-way LOCC channel, 1WL denotes the set of all one-way LOCC channels, the optimization is over every private state $ \gamma^k_{A'B'}$ of $\log_2 k$ secret key bits, $X_A$ is a classical flag held by Alice, and $I_{A'B'}$ is the identity operator.
\end{definition}

In \eqref{eq:prob_1W_distill_key_defn} we require that the state generated upon failure of the protocol is a maximally mixed state. This additional constraint on secret-key distillation protocols does not affect the maximum expected number of secret key bits that can be distilled from a bipartite state using one-way LOCC channels, as we show in Appendix~\ref{app:prob_dist_key_eq_defn}.

It is worthwhile to note that a maximally entangled state of Schmidt rank $k$ is a  private state holding $\log_2 k$ secret key bits~\cite{HHHO05, HHHO09}. As such, an arbitrary entanglement distillation protocol can be transformed into a secret-key distillation protocol without affecting the rate of distillation. Hence, the one-way distillable secret key of a quantum state is not less than the one-way distillable entanglement of the state in both the probabilistic and approximate settings. 

A simpler way to analyze the probabilistic distillation of secret keys under the action of one-way LOCC channels is by considering the erasure symbol. The erasure symbol $[e]$ is defined to be orthogonal to every state in the span of $\{|i\rangle\!\langle j|\}_{i,j=0}^{d-1}$, where $d$ is the dimension of the underlying system. For practical purposes, one can think of the erasure symbol as a pure state in the $(d+1)$-dimensional Hilbert space that is orthogonal to every state in the $d$-dimensional Hilbert space representing the system of interest. In the case of a joint system $S$ that comprises multiple subsystems, say, $S_1,S_2,\ldots,S_k$, the erasure symbol~$[e]_S$ can be represented as follows:
\begin{equation}
    [e]_S = [e]_{S_1}\otimes[e]_{S_2}\otimes\cdots\otimes[e]_{S_k},
\end{equation}
which is orthogonal to every state on the system $S$, and hence, it is consistent with the definition of the erasure symbol. 

In a one-way secret-key distillation protocol, if Alice finds the flag $X_A$ in the state $[0]_{X_A}$, she can erase her state, and she can instruct Bob to erase his state as well. We call the resulting state the doubly erased private state, which has the following form:  
\begin{equation}\label{eq:double_eras_priv_st}
    \eta^{p,k}_{AB} \coloneqq p~\gamma^k_{AB} + (1-p)[e]_A\otimes [e]_B,
\end{equation}
where both Alice and Bob can retrieve the flag by performing the POVM~$\{\Pi,[e]\}$ on their respective systems, where $\Pi \coloneqq \sum_{i=0}^{d-1} [i]$. They can further replace their state with a maximally mixed state upon measuring the erasure symbol, hence retrieving the distilled state in~\eqref{eq:prob_1W_distill_key_defn}. That is,
\begin{equation}\label{eq:2_eras_st_distill_st_tf}
    \eta^{p,k}_{AB} \xleftrightarrow[]{\operatorname{LO}} p[1]_{X_A}\otimes\gamma^k_{A'B'} + (1-p)[0]_{X_A}\otimes \frac{I_{A'B'}}{d_{A'}d_{B'}}.
\end{equation}
Since the transformation in \eqref{eq:2_eras_st_distill_st_tf} can be effected by one-way LOCC only, the distillation of the doubly erased private state~$\eta^{p,k}_{AB}$ is equivalent to the distillation of $\log_2 k$ secret key bits with probability $p$.

\section{Min-unextendible entanglement}

The min-unextendible entanglement of a bipartite state was defined in~\cite[Sec.~4.2]{WWW21} as a monotone for the state-dependent resource theory of unextendibility. We briefly mention the relevant properties of this quantity here, and we refer the reader to~\cite{WWW21} for a detailed presentation of the resource theory of unextendibility.

The min-unextendible entanglement is defined with respect to the min-relative entropy~\cite[Def.~2]{Dat09} as follows:
\begin{equation}
    E^u_{\operatorname{min}}\!\left(\rho_{AB}\right) \coloneqq \inf_{\sigma_{AB}\in \mathcal{F}\left(\rho_{AB}\right)}-\frac{1}{2} \log_2 \operatorname{Tr}\!\left[\Pi^{\rho}_{AB}\sigma_{AB}\right],
\end{equation}
where $\Pi^{\rho}_{AB}$ is the projection onto the support of $\rho_{AB}$ and the optimization is over all states in the following set:
\begin{equation}\label{eq:extension_set}
     \mathcal{F}\!\left(\rho_{AB}\right) \coloneqq \left\{
     \begin{array}{c}
         \operatorname{Tr}_{B}\!\left[\omega_{ABB'}\right]:  
         
          \rho_{AB} = \operatorname{Tr}_{B'}\!\left[\omega_{ABB'}\right],\\
          \omega_{ABB'} \in \mathcal{S}\!\left(ABB'\right)
     \end{array}
         \right\},
 \end{equation}
with $\mathcal{S}\!\left(ABB'\right)$ the set of all states of the joint system $ABB'$ and system $B'$ being isomorphic to system $B$. 

The min-unextendible entanglement of a bipartite state has several properties relevant to our discussion. Firstly, it is non-negative, and it is additive with respect to tensor products of states. It is monotonic under the action of two-extendible channels, as defined in~\cite{KDWW19, KDWW21}, which is a superset of one-way LOCC channels. As such, the min-unextendible entanglement of a bipartite state does not increase under the action of one-way LOCC channels. Lastly, the min-unextendible entanglement of a bipartite private state is not less than the number of secret key bits held by the private state. See Appendix~\ref{app:min_unext_ent} for more details. 

\begin{remark}\label{rem:min_uenxt_ent_eq_tf}
    The min-unextendible entanglement of a bipartite state does not increase under one-way LOCC channels. Hence, the transformation in \eqref{eq:2_eras_st_distill_st_tf} implies the following equality:
    \begin{multline}
        E^u_{\min}\!\left(\eta^{p,k}_{AB}\right) =\\ E^u_{\min}\!\left(p[1]_{X_A}\otimes \gamma^k_{AB} + (1-p)[0]_{X_A}\otimes\frac{I_{AB}}{d_Ad_B}\right).
    \end{multline}
\end{remark}

\section{Limitations on probabilistic one-way distillable secret key}

The monotonicity of the min-unextendible entanglement of a bipartite state under the action of one-way LOCC channels implies that the min-unextendible entanglement of the target state in a probabilistic one-way secret-key distillation protocol does not exceed the min-unextendible entanglement of the source state. We first present a lower bound on the min-unextendible entanglement of the doubly erased private state, which is the target state of a probabilistic one-way secret-key distillation protocol.

\begin{lemma}\label{lem:min_unext_ent_2_eras_eq}
    For all $p \in [0,1]$ and every integer $k \geq 2$, the min-unextendible entanglement of a doubly erased private state $\eta^{p,k}_{AB}$ is bounded from below by the following quantity:
    \begin{equation}
        E^u_{\min}\!\left(\eta^{p,k}_{AB}\right) \ge -\frac{1}{2}\log_2\!\left(\frac{p}{k^2} + 1-p\right)  .
    \end{equation}
\end{lemma}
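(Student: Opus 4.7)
The plan is to prove the bound directly from the definition of $E^u_{\min}$, by showing that $\operatorname{Tr}[\Pi^\eta_{AB}\sigma_{AB}]\le p/k^2+1-p$ for every $\sigma_{AB}\in\mathcal{F}(\eta^{p,k}_{AB})$, and then applying $-\tfrac{1}{2}\log_2$. Let $\omega_{AB_1B_2}$ be an arbitrary extension with $\omega_{AB_1}=\eta^{p,k}_{AB_1}$ and $\sigma_{AB_2}=\operatorname{Tr}_{B_1}[\omega_{AB_1B_2}]$, where $B_1,B_2\cong B$. First, decompose the support projector as
\begin{equation}
\Pi^\eta_{AB_2}=\Pi^\gamma_{AB_2}+[e]_A\otimes[e]_{B_2},
\end{equation}
so the trace splits into two pieces that I will bound independently.

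Next, exploit the block structure of $\eta^{p,k}_{AB_1}$ with respect to the rank-one erasure projector on Alice's side. Split $\omega$ via $P^e_A\coloneqq[e]_A$ and $P^{\bar e}_A\coloneqq I_A-[e]_A$, obtaining four blocks $\omega^{ee},\omega^{\bar e\bar e},\omega^{e\bar e},\omega^{\bar e e}$. Tracing $B_2$ recovers the blocks of $\eta^{p,k}_{AB_1}$, so the cross blocks have vanishing $AB_1$ marginal, the $\bar e\bar e$ block satisfies $\operatorname{Tr}_{B_2}[\omega^{\bar e\bar e}]=p\,\gamma^k_{AB_1}$, and the $ee$ block satisfies $\operatorname{Tr}_{B_2}[\omega^{ee}]=(1-p)[e]_A\otimes[e]_{B_1}$. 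Because $[e]_{B_1}$ is rank one, the PSD operator $\omega^{ee}$ is forced into the product form
\begin{equation}
\omega^{ee}_{AB_1B_2}=(1-p)[e]_A\otimes[e]_{B_1}\otimes\nu_{B_2}
\end{equation}
for some state $\nu_{B_2}$.

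With this decomposition in hand, the two pieces of $\Pi^\eta_{AB_2}$ can be bounded cleanly. For the erasure piece, the cross and $\bar e\bar e$ contributions vanish because $[e]_A P^{\bar e}_A=0$, yielding
\begin{equation}
\operatorname{Tr}\!\bigl[([e]_A\otimes[e]_{B_2})\sigma_{AB_2}\bigr]=(1-p)\langle e|\nu|e\rangle\le 1-p.
\end{equation}
For the private-state piece, the same orthogonality kills the cross and $ee$ contributions, so only $\omega^{\bar e\bar e}$ contributes. Since $\omega^{\bar e\bar e}/p$ is a valid extension of $\gamma^k_{AB_1}$, its $AB_2$ marginal lies in $\mathcal{F}(\gamma^k_{AB})$, and the bound $E^u_{\min}(\gamma^k_{AB})\ge\log_2 k$ from the min-unextendibility section of the excerpt gives $\operatorname{Tr}[\Pi^\gamma_{AB_2}\operatorname{Tr}_{B_1}(\omega^{\bar e\bar e}/p)]\le 1/k^2$, hence $\operatorname{Tr}[\Pi^\gamma_{AB_2}\sigma_{AB_2}]\le p/k^2$. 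Summing the two bounds and taking the infimum over $\sigma_{AB}\in\mathcal{F}(\eta^{p,k}_{AB})$ gives the claim.

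The main obstacle is really the structural step: forcing the $ee$ block of the extension into product form and verifying that the cross blocks are annihilated by both terms of $\Pi^\eta_{AB_2}$. Both facts rest on the rank-one nature of $[e]_A$ together with the block-diagonal structure of $\eta^{p,k}_{AB_1}$ in the erasure basis, but once they are established the argument reduces to the previously known min-unextendible entanglement bound for the private state $\gamma^k_{AB}$.
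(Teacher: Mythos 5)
Your proposal is correct and follows essentially the same route as the paper: your block decomposition of the extension with respect to $[e]_A$ and $I_A-[e]_A$ is exactly the paper's partially dephasing channel $\Delta_A$ in disguise, your product-form argument for the $ee$ block and the identification of $\omega^{\bar e\bar e}/p$ as an extension of $\gamma^k_{AB}$ reproduce the paper's Lemma on the structure of dephased extensions, and both arguments close with the same privacy bound $E^u_{\min}(\gamma^k_{AB})\ge\log_2 k$. The only cosmetic difference is that you bound the overlap $\operatorname{Tr}[\Pi^{\eta}\sigma]$ directly by noting the cross blocks are annihilated, whereas the paper routes this through the data-processing inequality for $D_{\min}$ under $\Delta_A$; you should just also note that the endpoint cases $p\in\{0,1\}$ (where the support projector decomposition degenerates) follow trivially from non-negativity and the privacy bound, as the paper does.
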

\begin{proof}
    See Appendix~\ref{app:min_unext_ent_2_eras_eq}.
\end{proof}

\medskip

The min-unextendible entanglement of the doubly erased private state $\eta^{p,k}_{AB}$  is strictly positive for all $p \in (0,1]$ and every integer $k\ge 2$, and it is only equal to zero at $p=0$ because the resulting state is a product state (\cite[Proposition~3]{WWW21}). As a consequence of the one-way LOCC monotonicity of the min-unextendible entanglement, a state whose min-unextendible entanglement is equal to zero cannot be used to distill any number of secret key bits with a non-zero probability using one-way LOCC channels. In what follows, we identify the structure of the set of states for which the min-unextendible entanglement is equal to zero and state the main result of this work formally in Theorem~\ref{theo:prob_distill_ads_eq_0}.

Let us first analyze the set of states whose min-unextendible entanglement is equal to zero. If there exists a state $\sigma_{AB} \in \mathcal{F}\!\left(\rho_{AB}\right)$ such that $\rho_{AB} = \sigma_{AB}$,
then $\rho_{AB}$ is a two-extendible state~\cite{Wer89, DPS02, DPS04}, and its min-unextendible entanglement is equal to zero. A more general set of states for which the min-unextendible entanglement is equal to zero, which we call super two-extendible states, can be defined as follows: 
\begin{equation}
    \operatorname{2-EXT}_{\operatorname{sup}}\!\left(A\!:\!B\right) \coloneqq  \left\{\begin{array}{c}
         \rho_{AB}:  
         
         \exists \sigma_{AB}\in \mathcal{F}\!\left(\rho_{AB}\right),\\
         \operatorname{supp}\!\left(\sigma_{AB}\right) \subseteq \operatorname{supp}\!\left(\rho_{AB}\right)\\
         
    \end{array}\right\}.
\end{equation}
The set of super two-extendible states is convex but not closed (see Appendix~\ref{app:sup_2_ext_convex_open}).

\begin{proposition}\label{prop:ads_min_unext_ent_eq_0}
    The min-unextendible entanglement of a quantum state is equal to zero if and only if it is super two-extendible. 
\end{proposition}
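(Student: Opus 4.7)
The plan is to prove the two directions of the biconditional separately, leveraging only the definitions involved and the non-negativity of $E^u_{\min}$ recalled in the excerpt.

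For the ``if'' direction, I would start from a super two-extendible $\rho_{AB}$, which by definition comes with a witness $\sigma_{AB} \in \mathcal{F}(\rho_{AB})$ satisfying $\operatorname{supp}(\sigma_{AB}) \subseteq \operatorname{supp}(\rho_{AB})$. Support containment gives $\Pi^{\rho}_{AB}\sigma_{AB} = \sigma_{AB}$, so $\operatorname{Tr}[\Pi^{\rho}_{AB}\sigma_{AB}] = 1$ and the corresponding term in the infimum defining $E^u_{\min}(\rho_{AB})$ equals $-\tfrac{1}{2}\log_2 1 = 0$. Combined with the known non-negativity of $E^u_{\min}$, this pins down $E^u_{\min}(\rho_{AB}) = 0$.

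For the ``only if'' direction, the task is to extract from $E^u_{\min}(\rho_{AB}) = 0$ an explicit element $\sigma^{*} \in \mathcal{F}(\rho_{AB})$ witnessing super two-extendibility. My strategy is to show the infimum is in fact attained. The set of extensions $\omega_{ABB'}$ of $\rho_{AB}$ is cut out of the compact set of density operators on $ABB'$ by the linear constraint $\operatorname{Tr}_{B'}[\omega_{ABB'}] = \rho_{AB}$, hence is closed and compact, and $\mathcal{F}(\rho_{AB})$ is its image under the continuous partial trace over $B$, so it is also compact. Since $\sigma \mapsto \operatorname{Tr}[\Pi^{\rho}_{AB}\sigma]$ is continuous on this compact domain, a maximizer $\sigma^{*}$ exists. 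The assumption forces $\operatorname{Tr}[\Pi^{\rho}_{AB}\sigma^{*}] = 1$, equivalently $\operatorname{Tr}[(I-\Pi^{\rho}_{AB})\sigma^{*}] = 0$; since both factors are positive semidefinite, this implies $(I-\Pi^{\rho}_{AB})\sigma^{*} = 0$, so $\operatorname{supp}(\sigma^{*}) \subseteq \operatorname{supp}(\rho_{AB})$, producing the required witness.

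The main subtlety I expect is the attainment of the infimum, which is precisely what makes the ``only if'' direction stronger than approximate feasibility. The parenthetical remark that the set of super two-extendible states is convex but not closed is a useful warning: even though each fiber $\mathcal{F}(\rho_{AB})$ is compact for a fixed $\rho_{AB}$, one cannot globally conclude that limits of super two-extendible states are super two-extendible, because as $\rho_{AB}$ varies its support projector $\Pi^{\rho}_{AB}$ can jump. I would therefore be careful to keep $\rho_{AB}$ fixed throughout the compactness argument and use continuity of $\sigma \mapsto \operatorname{Tr}[\Pi^{\rho}_{AB}\sigma]$ only for that fixed projector.
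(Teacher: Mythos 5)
Your proof is correct, and the overall strategy matches the paper's: the ``if'' direction is identical (support containment gives $\Pi^{\rho}_{AB}\sigma_{AB}=\sigma_{AB}$, hence unit overlap and, with non-negativity, $E^u_{\min}=0$). The ``only if'' direction differs in two worthwhile ways. First, the paper simply asserts that $E^u_{\min}(\rho_{AB})=0$ implies the \emph{existence} of a $\sigma_{AB}\in\mathcal{F}(\rho_{AB})$ with $\operatorname{Tr}[\Pi^{\rho}_{AB}\sigma_{AB}]=1$, ``which follows from the definition''; strictly speaking this requires the infimum to be attained, and your compactness argument (the extension set is a closed subset of the density operators cut out by the linear constraint $\operatorname{Tr}_{B'}[\omega_{ABB'}]=\rho_{AB}$, and $\mathcal{F}(\rho_{AB})$ is its continuous image) supplies exactly the justification the paper leaves implicit. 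Second, to pass from unit overlap to support containment, the paper invokes the gentle measurement lemma to conclude $\Vert\Pi^{\rho}_{AB}\sigma_{AB}\Pi^{\rho}_{AB}-\sigma_{AB}\Vert_1=0$, whereas you use the elementary fact that $\operatorname{Tr}[(I-\Pi^{\rho}_{AB})\sigma^{*}]=0$ with both factors positive semidefinite forces $(I-\Pi^{\rho}_{AB})\sigma^{*}=0$; both are valid, and yours is arguably the more economical route. Your closing caution about keeping $\rho_{AB}$ fixed, so that the non-closedness of the super two-extendible set does not interfere with the compactness of the single fiber $\mathcal{F}(\rho_{AB})$, is well placed.
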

\begin{proof}
    See Appendix~\ref{app:ads_min_unext_ent_eq_0}.
\end{proof}

\medskip

While the approximate one-way distillable secret key of a two-extendible state, also known as an anti-degradable state~\cite{LDS18}, is equal to zero~\cite[Thm.~15.43]{KW24}, the same is not true for a general super two-extendible state, as we shall see later in this paper. However, 
a super two-extendible state cannot be used for probabilistic one-way secret key distillation since its min-unextendible entanglement is equal to zero. Combining this fact with the additive property of the min-unextendible entanglement, we arrive at our main no-go theorem stated as Theorem~\ref{theo:prob_distill_ads_eq_0}, which identifies a broad set of states for which the probabilistic one-way distillable secret key is equal to zero.

\begin{theorem}\label{theo:prob_distill_ads_eq_0}
    The probabilistic one-way distillable secret key of a super two-extendible state is equal to zero.
\end{theorem}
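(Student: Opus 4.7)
The plan is to show that any one-shot probabilistic one-way distillation protocol applied to any tensor power of a super two-extendible state forces either $p = 0$ or $k = 1$, so that the distilled rate $p\log_2 k$ vanishes identically. Assuming $\rho_{AB}$ is super two-extendible, Proposition~\ref{prop:ads_min_unext_ent_eq_0} gives $E^u_{\min}(\rho_{AB}) = 0$, and additivity of the min-unextendible entanglement (reviewed in Appendix~\ref{app:min_unext_ent}) upgrades this to $E^u_{\min}(\rho_{AB}^{\otimes n}) = 0$ for every $n \in \mathbb{N}$.

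I would then fix an arbitrary protocol in the supremum of~\eqref{eq:prob_1W_distill_key_defn} applied to $\rho_{AB}^{\otimes n}$, producing the flagged output state with parameters $(p, k)$. Monotonicity of $E^u_{\min}$ under one-way LOCC channels yields
\[
    E^u_{\min}\!\left(p[1]_{X_A}\otimes \gamma^k_{A'B'} + (1-p)[0]_{X_A}\otimes \frac{I_{A'B'}}{d_{A'}d_{B'}}\right) \le E^u_{\min}\!\left(\rho_{AB}^{\otimes n}\right) = 0.
\]
Remark~\ref{rem:min_uenxt_ent_eq_tf} equates the left-hand side with $E^u_{\min}(\eta^{p,k}_{A'B'})$, and Lemma~\ref{lem:min_unext_ent_2_eras_eq} (valid for $k \ge 2$) bounds this below by $-\tfrac{1}{2}\log_2(p/k^2 + 1-p)$. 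Chaining the two gives $-\tfrac{1}{2}\log_2(p/k^2 + 1-p) \le 0$, which rearranges to $p(1/k^2 - 1) \ge 0$; for $k \ge 2$ the coefficient $1/k^2 - 1$ is strictly negative, forcing $p = 0$. The remaining branch $k = 1$ contributes rate $p\log_2 k = 0$ trivially, so $K_D^{(1),\to}(\rho_{AB}^{\otimes n}) = 0$ for every $n$, and the claim follows on dividing by $n$ and taking the $\liminf$.

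No serious conceptual obstacle arises --- the combined machinery of Lemma~\ref{lem:min_unext_ent_2_eras_eq}, Proposition~\ref{prop:ads_min_unext_ent_eq_0}, and the monotonicity and additivity of $E^u_{\min}$ encode everything that is needed. The one subtle point to verify is that the constraint built into~\eqref{eq:prob_1W_distill_key_defn} on the failure output (that it be the maximally mixed state) is exactly what makes Remark~\ref{rem:min_uenxt_ent_eq_tf} applicable, so that the monotonicity step compares $E^u_{\min}$ of the doubly erased private state $\eta^{p,k}_{A'B'}$ to that of $\rho_{AB}^{\otimes n}$ rather than some messier mixture. Once that identification is made, the strict positivity of $E^u_{\min}(\eta^{p,k}_{A'B'})$ for $p>0$, $k\ge 2$ from Lemma~\ref{lem:min_unext_ent_2_eras_eq} obstructs any non-trivial distillation rate on every super two-extendible input.
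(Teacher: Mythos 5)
Your proposal is correct and follows essentially the same route as the paper's own proof: establish $E^u_{\min}(\rho_{AB}^{\otimes n})=0$ via Proposition~\ref{prop:ads_min_unext_ent_eq_0} and additivity, use one-way LOCC monotonicity together with Remark~\ref{rem:min_uenxt_ent_eq_tf} to force $E^u_{\min}(\eta^{p,k}_{A'B'})=0$, and then invoke Lemma~\ref{lem:min_unext_ent_2_eras_eq} to conclude $p=0$ whenever $k\ge 2$. The only cosmetic difference is that you rearrange the logarithmic inequality algebraically to deduce $p=0$, whereas the paper solves the equation $0=-\tfrac{1}{2}\log_2(p/k^2+1-p)$ directly; both are equivalent.
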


\begin{proof}
    See Appendix~\ref{app:prob_distill_eras_st_eq_0_proof}.
\end{proof}

\subsection{Erased private state}

Let us consider the following state, which is established if Bob's share of a bipartite private state~$\gamma^k_{AB}$ gets erased with probability $1-p$:
\begin{equation}\label{eq:eras_priv_st_defn}
    \widetilde{\eta}^{p,k}_{AB} \coloneqq p~\gamma^k_{AB} + (1-p)\operatorname{Tr}_B\!\left[\gamma^k_{AB}\right]\otimes [e]_B,
\end{equation}
where $\operatorname{Tr}_B\!\left[\cdot\right]$ refers to the partial trace of the argument over system~$B$. We call the state in \eqref{eq:eras_priv_st_defn} an erased private state.

\begin{proposition}\label{lem:eras_min_unext_ent_eq_0}
    For all $p \in [0,1)$ and every integer $k \geq 2$, the erased private state $\widetilde{\eta}^{p,k}_{AB}$ is a super two-extendible state and thus has probabilistic one-way distillable secret key equal to zero. 
\end{proposition}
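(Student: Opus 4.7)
The plan is to invoke Theorem~\ref{theo:prob_distill_ads_eq_0}, so I only need to exhibit, for each $p \in [0,1)$ and $k \geq 2$, a witness that $\widetilde{\eta}^{p,k}_{AB}$ is super two-extendible: namely, an extension $\omega_{ABB'}$ on $ABB'$ with $\operatorname{Tr}_{B'}[\omega_{ABB'}] = \widetilde{\eta}^{p,k}_{AB}$ whose other marginal $\sigma_{AB'} \coloneqq \operatorname{Tr}_B[\omega_{ABB'}]$ has support contained in that of $\widetilde{\eta}^{p,k}_{AB'}$.

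The extension I would use is the simplest one imaginable: append a fresh erasure on $B'$, i.e.,
\begin{equation}
\omega_{ABB'} \coloneqq \widetilde{\eta}^{p,k}_{AB}\otimes [e]_{B'}.
\end{equation}
This is manifestly a valid state, and tracing out $B'$ returns $\widetilde{\eta}^{p,k}_{AB}$ by construction, so $\omega_{ABB'}$ is a legitimate element of the extension set defining $\mathcal{F}(\widetilde{\eta}^{p,k}_{AB})$.

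Next I would compute the other marginal. Writing $\rho_A \coloneqq \operatorname{Tr}_B[\gamma^k_{AB}]$ and using $\operatorname{Tr}[[e]_B]=1$, the definition of $\widetilde{\eta}^{p,k}_{AB}$ gives $\operatorname{Tr}_B[\widetilde{\eta}^{p,k}_{AB}] = p\rho_A + (1-p)\rho_A = \rho_A$, so
\begin{equation}
\sigma_{AB'} = \rho_A\otimes [e]_{B'}.
\end{equation}
The containment $\operatorname{supp}(\sigma_{AB'})\subseteq \operatorname{supp}(\widetilde{\eta}^{p,k}_{AB'})$ then follows from the orthogonality of the two summands in $\widetilde{\eta}^{p,k}_{AB'}$: since $[e]_{B'}$ is orthogonal to every state on the original $B'$ system, the support of $\widetilde{\eta}^{p,k}_{AB'}$ is the direct sum of $\operatorname{supp}(\gamma^k_{AB'})$ and $\operatorname{supp}(\rho_A)\otimes\{|e\rangle_{B'}\}$, where the second summand is nontrivial precisely because $p<1$. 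The support of $\sigma_{AB'}$ equals that second summand, giving the required containment.

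The only real subtlety is the restriction $p<1$: at $p=1$ the erasure component disappears from $\widetilde{\eta}^{p,k}_{AB'}$ and the support argument collapses, consistent with the fact that $\widetilde{\eta}^{1,k}_{AB}=\gamma^k_{AB}$ does yield positive distillable key. For all $p<1$, combining the above with Proposition~\ref{prop:ads_min_unext_ent_eq_0} shows $E^u_{\min}(\widetilde{\eta}^{p,k}_{AB}) = 0$, and Theorem~\ref{theo:prob_distill_ads_eq_0} then delivers $K_D^{\to}(\widetilde{\eta}^{p,k}_{AB})=0$, completing the proof.
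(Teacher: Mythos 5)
Your proof is correct, but it takes a genuinely different route from the paper's. The paper exhibits the symmetric extension $p\,\gamma^k_{AB}\otimes[e]_E + (1-p)\,\gamma^k_{AE}\otimes[e]_B$, computes its other marginal to be $\widetilde{\eta}^{1-p,k}_{AE}$, and then argues that $\widetilde{\eta}^{p,k}$ and $\widetilde{\eta}^{1-p,k}$ have identical supports for $p\in(0,1)$, so that $D_{\min}$ vanishes; the edge case $p=0$ is handled separately via separability. You instead use the trivial product extension $\widetilde{\eta}^{p,k}_{AB}\otimes[e]_{B'}$, whose other marginal $\rho_A\otimes[e]_{B'}$ sits entirely inside the erasure block $\operatorname{supp}(\rho_A)\otimes\operatorname{span}\{\ket{e}\}$ of $\widetilde{\eta}^{p,k}$, which is present exactly when $p<1$. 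Your route is more economical: it verifies the defining support-containment of super two-extendibility directly (only containment is needed, not equality of supports), it treats $p=0$ uniformly with the rest of the range, and the witness marginal is a product state requiring no computation. What the paper's construction buys is a marginal that is itself an erased private state with the \emph{same} support as $\widetilde{\eta}^{p,k}$, which makes the vanishing of $D_{\min}$ immediate without decomposing the support into orthogonal blocks; but your orthogonality argument for the block decomposition is elementary and correct, since $(I_A\otimes[e]_B)\,\gamma^k_{AB}=0$. One cosmetic remark: the detour through Proposition~\ref{prop:ads_min_unext_ent_eq_0} to recover $E^u_{\min}=0$ is unnecessary, as Theorem~\ref{theo:prob_distill_ads_eq_0} applies to super two-extendible states directly.
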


\begin{proof}
    See Appendix~\ref{app:eras_min_unext_ent_eq_0}.
\end{proof}

\medskip

An erased state is defined as follows:
\begin{equation}\label{eq:eras_st_defn}
    \widetilde{\Phi}^{p,d}_{AB} \coloneqq p~\Phi^d_{AB} + (1-p)\frac{I_A}{d_A}\otimes [e]_B,
\end{equation}
where $\Phi^d_{AB}$ is the maximally entangled state with Schmidt rank equal to $d$. Since the erased state is a special case of an erased private state, it is in the set of super two-extendible states for all $p\in [0,1)$, which leads to Corollary~\ref{cor:prob_distill_eras_st_eq_0} stated below. 

\begin{corollary}
\label{cor:prob_distill_eras_st_eq_0}
    For all $p \in [0,1)$ and every integer $d \geq 2$, the probabilistic one-way distillable secret key of the erased state~$\widetilde{\Phi}^{p,d}_{AB}$ is equal to zero.
\end{corollary}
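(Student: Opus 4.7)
The plan is to observe that the erased state $\widetilde{\Phi}^{p,d}_{AB}$ is a particular instance of an erased private state, and then invoke Proposition~\ref{lem:eras_min_unext_ent_eq_0} together with Theorem~\ref{theo:prob_distill_ads_eq_0}. Concretely, I would begin by recalling the fact noted earlier in the paper that a maximally entangled state $\Phi^d_{AB}$ of Schmidt rank $d$ is itself a bipartite private state carrying $\log_2 d$ secret key bits (take the shield systems to be trivial in \eqref{eq:bip_private_state_defn}). Thus $\Phi^d_{AB}$ qualifies as a valid choice of $\gamma^k_{AB}$ with $k = d$.

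Next I would check that the convex decomposition defining $\widetilde{\Phi}^{p,d}_{AB}$ in \eqref{eq:eras_st_defn} matches the form of $\widetilde{\eta}^{p,k}_{AB}$ in \eqref{eq:eras_priv_st_defn}. Specifically, the partial trace $\operatorname{Tr}_B[\Phi^d_{AB}] = I_A/d_A$ is precisely the marginal appearing on the classical erasure branch of \eqref{eq:eras_st_defn}, so setting $\gamma^k_{AB} = \Phi^d_{AB}$ gives $\widetilde{\eta}^{p,d}_{AB} = \widetilde{\Phi}^{p,d}_{AB}$ identically.

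With this identification in place, Proposition~\ref{lem:eras_min_unext_ent_eq_0} immediately applies: for all $p \in [0,1)$ and every integer $d \ge 2$, the state $\widetilde{\Phi}^{p,d}_{AB}$ is super two-extendible. Theorem~\ref{theo:prob_distill_ads_eq_0} then yields $K_D^{\to}(\widetilde{\Phi}^{p,d}_{AB}) = 0$, completing the argument.

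There is essentially no obstacle here; the corollary is a direct specialization of results already established in the paper. The only point meriting a brief sanity check is that the definition of a private state in \eqref{eq:bip_private_state_defn} does indeed accommodate the case of trivial shield systems $A_1, B_1$, so that $\Phi^d_{AB}$ falls within the class of $\gamma^k_{AB}$'s to which Proposition~\ref{lem:eras_min_unext_ent_eq_0} applies. Once that is noted, the proof is a single-line deduction.
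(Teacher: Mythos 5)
Your proof is correct and follows exactly the paper's route: the paper likewise observes that $\Phi^d_{AB}$ is a private state with trivial shields, so that $\widetilde{\Phi}^{p,d}_{AB}$ is a special case of the erased private state $\widetilde{\eta}^{p,k}_{AB}$, and then invokes Proposition~\ref{lem:eras_min_unext_ent_eq_0} together with Theorem~\ref{theo:prob_distill_ads_eq_0}. Nothing is missing.
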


The set of full-rank states also lies inside the set of super two-extendible states, which leads to  Corollary~\ref{cor:dist_full_rank} below.

\begin{corollary}\label{cor:dist_full_rank}
    All full-rank states are super two-extendible states, and the probabilistic one-way distillable secret key of such quantum states is equal to zero.
\end{corollary}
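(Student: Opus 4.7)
The plan is to reduce the corollary to Theorem~\ref{theo:prob_distill_ads_eq_0}, which already supplies the final implication $K_D^{\to}(\rho_{AB}) = 0$ for any super two-extendible state. Hence the entire content reduces to showing that every full-rank bipartite state $\rho_{AB}$ belongs to $\operatorname{2-EXT}_{\operatorname{sup}}(A\!:\!B)$, i.e., that some $\sigma_{AB} \in \mathcal{F}(\rho_{AB})$ has support contained in $\operatorname{supp}(\rho_{AB})$.

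To exhibit such a $\sigma_{AB}$, I would write down the simplest possible extension: take $\omega_{ABB'} \coloneqq \rho_{AB} \otimes \tau_{B'}$, where $\tau_{B'}$ is any state on $B'$ (for concreteness, one can choose $\tau_{B'} = I_{B'}/d_{B'}$). This $\omega_{ABB'}$ is manifestly a valid state on $ABB'$, and because $\operatorname{Tr}[\tau_{B'}] = 1$, its reduction satisfies $\operatorname{Tr}_{B'}[\omega_{ABB'}] = \rho_{AB}$, as required by the definition of $\mathcal{F}(\rho_{AB})$. Then
\begin{equation}
    \sigma_{AB'} \;=\; \operatorname{Tr}_{B}[\omega_{ABB'}] \;=\; \rho_A \otimes \tau_{B'},
\end{equation}
which, once $B'$ is relabeled as $B$ via the isomorphism $B' \cong B$, yields a legitimate element $\sigma_{AB} \in \mathcal{F}(\rho_{AB})$.

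The last step is the trivial one that makes the full-rank hypothesis do its job: since $\rho_{AB}$ has full rank on $AB$, one has $\operatorname{supp}(\rho_{AB}) = \mathcal{H}_{AB}$, so the inclusion $\operatorname{supp}(\sigma_{AB}) \subseteq \operatorname{supp}(\rho_{AB})$ holds automatically for any state $\sigma_{AB}$ whatsoever. This places $\rho_{AB}$ in $\operatorname{2-EXT}_{\operatorname{sup}}(A\!:\!B)$, and invoking Theorem~\ref{theo:prob_distill_ads_eq_0} finishes the argument.

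There is no real obstacle here: the point is precisely that full support removes the only nontrivial constraint in the definition of super two-extendibility, and product extensions always exist. The only thing to handle with care is the $B \leftrightarrow B'$ relabeling when comparing the supports of $\rho_{AB}$ and $\sigma_{AB}$, but this is bookkeeping rather than a genuine difficulty.
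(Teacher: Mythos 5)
Your proposal is correct and uses essentially the same witness as the paper: the product extension $\rho_{AB}\otimes\tau_{B'}$, whose $AB'$-marginal $\rho_A\otimes\tau_{B'}$ lies in $\mathcal{F}(\rho_{AB})$, combined with the observation that full rank makes the support-containment condition automatic. The only cosmetic difference is that you verify the definition of super two-extendibility directly, whereas the paper computes $E^u_{\min}(\rho_{AB})=0$ from the same extension (the full-rank hypothesis entering as $\Pi^{\rho}_{AB}=I_{AB}$) and then concludes via Proposition~\ref{prop:ads_min_unext_ent_eq_0}; your route is marginally more direct but mathematically identical.
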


\begin{proof}
    See Appendix~\ref{app:dist_full_rank}.
\end{proof}

\medskip

The inability to probabilistically distill resources from full-rank states has been observed in general quantum resource theories~\cite{Kent98, FL20, Regula22b}. Corollary~\ref{cor:dist_full_rank} extends this result to probabilistic secret-key distillation under one-way LOCC channels. This result also implies that the overhead of probabilistic secret-key distillation considered in~\cite[Sec.~V-B]{WWW21} is infinite for quantum states with full-rank density matrices, thus providing a significant strengthening of the bounds from~\cite[Sec.~V-B]{WWW21}.

\subsection{Approximate versus probabilistic secret key distillation}

Given the inability to distill a secret key from super two-extendible states even probabilistically, it is natural to ask if one can distill a key from such states if an arbitrarily small error is allowed. The rate at which secret-key bits can be distilled from a state such that the error vanishes asymptotically, using a one-way LOCC protocol, is called the one-way distillable key of the state~\cite{DW2005}, or the approximate one-way distillable key of the state, as we call it in this paper to clearly distinguish it from the probabilistic one-way distillable key of the state.

Indeed, it is possible distill a secret key from a state when an asymptotically vanishing error is allowed in distillation even if a perfect key cannot be distilled with any non-zero probability from the states, as we can see from the example of erased states. The approximate one-way distillable secret key is not less than the approximate one-way distillable entanglement of a state, which in turn is bounded from below by the coherent information of the state~\cite[Thm.~10]{DW2005}. For $p > \frac{1}{2} $, the coherent information of an erased state $\widetilde{\Phi}^{p,d}_{AB}$  is strictly non-negative. Similarly, the coherent information of some isotropic~\cite{HH99} and Werner states~\cite{Wer89_HV}, which are full-rank states, is non-zero (see Appendices~\ref{app:coherent_info_Werner} and~\ref{app:coherent_info_iso_proof}). However, Corollaries~\ref{cor:prob_distill_eras_st_eq_0} and~\ref{cor:dist_full_rank} state that the probabilistic one-way distillable secret key of these states is equal to zero, demonstrating an extreme gap between the probabilistic and approximate one-way distillable secret key.

\section{Conclusion}

We found a convex set of states, that we call super two-extendible states, from which it is impossible to probabilistically distill perfect secret-key bits with any non-zero probability. Our main result is a no-go theorem stating that the probabilistic one-way distillable secret key of a super two-extendible state, and consequently its probabilistic one-way distillable entanglement, is equal to zero.

We demonstrated an extreme gap between the probabilistic and approximate one-way distillable secret key for some  states. As such, we showed that probabilistic distillation of perfect secret keys using one-way LOCC protocols is not feasible for most states of interest, and one must allow some error for practically reasonable key distillation protocols. Considering the intermediate regime between probabilistic and approximate distillation~\cite{FL20, Regula22, EW22}, where the goal is to distill high-fidelity maximal-resource states with a non-zero probability, would shed light on the trade-off between distillation error and success probability in the task of probabilistic distillation of noisy secret keys.

\section*{Acknowledgments}
We are grateful to Kaiyuan Ji, Ludovico Lami, Felix Leditzky, Theshani Nuradha, Dhrumil Patel, Aby Philip, Bartosz Regula, and Soorya Rethinasamy for insightful discussions. We are especially grateful to  Ludovico Lami and Bartosz Regula for feedback that greatly improved our paper. We acknowledge financial support from NSF Grant No.~2315398.

\bibliographystyle{quantum}
\bibliography{Ref}

\pagebreak
\onecolumn
\appendix
\large 

\section{Equivalent definitions of probabilistic one-way distillable secret key}

\label{app:prob_dist_key_eq_defn}

In this section, we justify the definition of one-shot probabilistic one-way distillable secret key of a bipartite state given in \eqref{eq:prob_1W_distill_key_defn} by showing an equivalence between \eqref{eq:prob_1W_distill_key_defn} and a more general definition of the one-shot probabilistic one-way distillable secret key that we propose in \eqref{eq:prob-1W-key-gen-def}.

Consider a general probabilistic one-way secret-key distillation protocol in which a one-way LOCC channel $\mathcal{L}^{\to}_{AB\to X_AA'B'}$ acts on a bipartite state $\rho_{AB}$ to establish the following state:
\begin{equation}\label{eq:key_distill_arb_form}
    \mathcal{L}^{\to}\!\left(\rho_{AB}\right) = p[1]_{X_A}\otimes\gamma^k_{A'B'} + (1-p)[0]_{X_A}\otimes\sigma_{A'B'},
\end{equation}
where $\gamma^k_{A'B'}$ is a bipartite private state holding $\log_2 k$ secret key bits and $\sigma_{A'B'}$ is an arbitrary bipartite state. As such, this protocol can be used to distill an expected number $p\log_2 k$ of  secret key bits from a single instance of $\rho_{AB}$. The one-shot probabilistic one-way distillable secret key of the state $\rho_{AB}$ is defined generally as
\begin{equation}
\label{eq:prob-1W-key-gen-def}
        K_{D}^{(1),\to}(\rho_{AB}) \coloneqq
        \sup_{\substack{p \in [0,1], k\in \mathbb{N}\\ \mathcal{L}^{\to} \in \operatorname{1WL}, \gamma^k_{A'B'}}}\left\{
        \begin{array}{c}
             p \log_2 k:\\
             \mathcal{L}^{\to}\!\left(\rho_{AB}\right) =
             p[1]_{X_A}\otimes\gamma^k_{A'B'} \\
             + (1-p)[0]_{X_A}\otimes\sigma_{A'B'},\\
             \sigma_{A'B'}\in \mathcal{S}\!\left(A'B'\right)
        \end{array}
        \right\}.
    \end{equation}

In the presence of forward classical communication, Alice can send a copy of the flag $X_A$ to Bob, so that both parties know when the protocol is unsuccessful in establishing the private state $\gamma^k_{A'B'}$, hence, arriving at the following state:
\begin{equation}
    (\mathcal{C}_{X_A\to X_AX_B}\circ\mathcal{L}^{\to})\left(\rho_{AB}\right) = p[1]_{X_A}\otimes[1]_{X_B}\otimes\gamma^k_{A'B'}
    + (1-p)[0]_{X_A}\otimes[0]_{X_B}\otimes\sigma_{A'B'},
\end{equation}
where $\mathcal{C}_{X_A\to X_AX_B}$ is a classical channel that copies the classical data from $X_A$ to $X_B$. Now that both parties hold a copy of the flag, they can trace out their states and replace them with a locally prepared maximally mixed state if the flag indicates that the protocol was unsuccessful in establishing a private state. That is, Alice applies the following local channel on her system:
\begin{equation}
    \mathcal{R}_{X_AA'\to X_AA'}\!\left(\cdot\right) = \frac{I_{A'}}{d_{A'}}\otimes\operatorname{Tr}_{A'}\!\left[[0]_{X_A}\left(\cdot\right)[0]_{X_A}\right] + \operatorname{id}_{X_AA'\to X_AA'}\!\left([1]_{X_A}\left(\cdot\right)[1]_{X_A}\right),
\end{equation}
and Bob applies the corresponding local channel on his systems. Tracing out Bob's flag, we arrive at the following quantum state:
\begin{align}
    (\mathcal{R}\circ\mathcal{C}\circ\mathcal{L}^{\to})\!\left(\rho_{AB}\right)
    &= p[1]_{X_A}\otimes\gamma^k_{A'B'} + (1-p)[0]_{X_A}\otimes\frac{I_{A'}}{d_{A'}}\otimes\frac{I_{B'}}{d_{B'}}\\
    &= p[1]_{X_A}\otimes\gamma^k_{A'B'} + (1-p)[0]_{X_A}\otimes\frac{I_{A'B'}}{d_{A'}d_{B'}}.\label{eq:key_distill_mix_form}
\end{align}

Observe that $\mathcal{R}\circ\mathcal{C}\circ\mathcal{L}^{\to}$ is also a one-way LOCC channel that distills an expected number $p\log_2 k$ of  secret key bits from a single instance of $\rho_{AB}$. Since such a one-way LOCC transformation can be designed for every one-way LOCC channel $\mathcal{L}^{\to}_{AB\to X_AA'B'}$, we can conclude the following statement: if there exists a one-way LOCC channel that distills a quantum state of the form given in \eqref{eq:key_distill_arb_form}, then there also exists a one-way LOCC channel that distills the quantum state of the form given in \eqref{eq:key_distill_mix_form}. As the expected number of secret key bits distilled from $\rho_{AB}$ using either of the channels is equal to $p\log_2 k$, we can restrict the optimization in the definition of one-shot probabilistic one-way distillable secret key to channels that establish a quantum state of the form given in \eqref{eq:key_distill_mix_form}. Hence, we arrive at the equivalent definition of the one-shot probabilistic one-way distillable secret key given in \eqref{eq:prob_1W_distill_key_defn}.

\section{The min-unextendible entanglement of bipartite states}\label{app:min_unext_ent}

A measure for quantifying the unextendibility of quantum states, called unextendible entanglement, was introduced in~\cite{WWW21}. Let us briefly discuss this quantity, which is one of the major components used to prove our main results. Recall that a generalized divergence~$\mathbf{D}$ is a function of two quantum states that is non-increasing under the action of a quantum channel~\cite{polyanskiy2010arimoto}. The unextendible entanglement of a bipartite  state~\cite{WWW21} is defined in terms of a generalized divergence as follows:

\begin{definition}[\cite{WWW21}]
\label{def:unext-ent}
    The generalized unextendible entanglement of a bipartite state $\rho_{AB}$, induced by a generalized divergence~$\mathbf{D}$, is defined as
        \begin{equation}
        \label{eq:gen_unext_ent_states}
        \mathbf{E}^u(\rho_{AB}) \coloneqq  \inf_{\sigma_{AB}\in \mathcal{F}\left(\rho_{AB}\right)} \frac{1}{2} \mathbf{D}\!\left(\rho_{AB}\Vert\sigma_{AB}\right),
    \end{equation}    
    where $\mathcal{F}\!\left(\rho_{AB}\right)$ is defined in \eqref{eq:extension_set}. 
\end{definition}

We also use the notation $\mathbf{E}^u\!\left(\rho_{A:B}\right)$ to clarify the bipartition of systems over which the unextendible entanglement is being considered.

The unextendible entanglement provides a framework for quantifying the unextendibility of a bipartite state $\rho_{AB}$ with respect to the system $B$. Crucially, this quantity is monotonic under the action of two-extendible channels~\cite[Theorem~2]{WWW21}, and hence, it is also monotonic under the action of one-way LOCC channels. (See~\cite{KDWW19,KDWW21,WWW21} for the definition of a two-extendible channel.) Stated formally, \cite[Theorem~2]{WWW21} establishes the following: for every quantum state $\rho_{AB}$ and two-extendible channel $\mathcal{N}_{AB\to A'B'}$, the following inequality holds:
    \begin{equation}
        \mathbf{E}^u(\rho_{AB}) \ge \mathbf{E}^u(\mathcal{N}_{AB\to A'B'}\!\left(\rho_{AB}\right)) .
        \label{eq:unext-ent-monotone-WWW21}
    \end{equation}

A different measure for unextendibility was proposed in~\cite{KDWW19, KDWW21} where the divergence was measured with respect to a fixed set of two-extendible states. However, in Definition~\ref{def:unext-ent}, the divergence is measured by means of a set of states that depend on the input state itself. Although both measures are equal to the minimal possible value of $\mathbf{D}$ when $\rho_{AB}$ is two-extendible, they are not equal in general.

A particular example of interest and one of the main tools in our paper is the min-unextendible entanglement~\cite{WWW21}, defined as follows:
\begin{equation}
    E^u_{\operatorname{min}}\!\left(\rho_{AB}\right) \coloneqq \inf_{\sigma_{AB}\in \mathcal{F}\left(\rho_{AB}\right)}\frac{1}{2} D_{\min}\!\left(\rho_{AB}\Vert\sigma_{AB}\right),
\end{equation}
where $D_{\min}$ is the min-relative entropy~\cite[Def.~2]{Dat09}:
\begin{equation}\label{eq:D_min_proj_eq}
    D_{\min}\!\left(\omega\Vert\tau\right) \coloneqq  -\log_2\operatorname{Tr}\!\left[\Pi^{\omega}\tau\right],
\end{equation}
with $\omega$ a state, $\tau$ a positive semidefinite operator, and $\Pi^{\omega}$ the projection onto the support of $\omega$.

The min-unextendible entanglement has the following properties~\cite{WWW21}:
\begin{enumerate}
    \item \textbf{Non-negativity:} The min-unextendible entanglement is non-negative for a bipartite quantum state, as a consequence of the min-relative entropy being non-negative for all pairs of states. That is,
    \begin{equation}
        E^u_{\min}\!\left(\rho_{AB}\right) \ge 0 \quad \forall \rho_{AB}\in \mathcal{S}\!\left(AB\right).
        \label{eq:min-unext-non-neg}
    \end{equation}
    \item \textbf{Additivity~\cite[Prop.~15]{WWW21}:} The min-unextendible entanglement is additive with respect to a tensor product of the states $\rho_{A_1B_1}$ and $\sigma_{A_2 B_2}$:
    \begin{equation}
        E^u_{\operatorname{min}}\!\left(\rho_{A_1:B_1}\otimes \sigma_{A_2:B_2}\right) = E^u_{\operatorname{min}}\!\left(\rho_{A_1:B_1}\right) + E^u_{\operatorname{min}}\!\left(\sigma_{A_2:B_2}\right).
        \label{eq:min-unext-additive}
    \end{equation}
    \item \textbf{Privacy bound~\cite[Prop.~21]{WWW21}} The min-unextendible entanglement of a bipartite private state $\gamma^k_{AB}$, holding $\log_2 k$ secret key bits, is bounded from below as follows:
    \begin{equation}
        E^u_{\min}\!\left(\gamma^k_{AB}\right) \ge \log_2 k.
    \end{equation}
\end{enumerate}

\section{Proof of Lemma~\ref{lem:min_unext_ent_2_eras_eq}}\label{app:min_unext_ent_2_eras_eq}

We restate Lemma~\ref{lem:min_unext_ent_2_eras_eq} here for the reader's convenience.

\emph{
    For all $p \in [0,1]$ and every integer $k \geq 2$, the min-unextendible entanglement of a doubly erased private state $\eta^{p,k}_{AB}$ is bounded from below by the following quantity:}
    \begin{equation}
        E^u_{\min}\!\left(\eta^{p,k}_{AB}\right) \ge -\frac{1}{2}\log_2\!\left(\frac{p}{k^2} + 1-p\right)  .
    \end{equation}

\medskip

Consider the following partially dephasing channel:
\begin{equation}\label{eq:dephas_channel_defn}
        \Delta_{A}\!\left(\cdot\right) \coloneqq \Pi_A\left(\cdot\right)\Pi_A + [e]_A\left(\cdot\right)[e]_A,
\end{equation}
where $\Pi_A \coloneqq \sum_{i=0}^{d-1} [i]$, so that $\Pi_A [e] = [e]\Pi_A = 0$.
In Lemmas~\ref{lem:dephased_ext} and~\ref{lem:dephas_ext_form}, 
we first characterize the action of $\Delta_A$ on an arbitrary state $\omega_{AE} \in \mathcal{F}\!\left(\eta^{p,k}_{AB}\right)$. We then use  Lemmas~\ref{lem:dephased_ext} and~\ref{lem:dephas_ext_form} to prove Lemma~\ref{lem:min_unext_ent_2_eras_eq}.

\begin{lemma}\label{lem:dephased_ext}
    Let $\omega_{AE}$ be an arbitrary state in the set $\mathcal{F}\!\left(\eta^{p,k}_{AB}\right)$, where $\eta^{p,k}$ is a doubly erased state, as defined in \eqref{eq:double_eras_priv_st}. Then $\Delta_A\!\left(\omega_{AE}\right)$  is also in the set $\mathcal{F}\!\left(\eta^{p,k}_{AB}\right)$, where $\Delta_A$ is defined in~\eqref{eq:dephas_channel_defn}.
\end{lemma}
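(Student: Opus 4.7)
The plan is to build an extension of $\eta^{p,k}_{AB}$ whose $AE$-marginal equals $\Delta_A(\omega_{AE})$ by simply applying $\Delta_A$ (on system $A$) to the tripartite extension $\omega_{ABE}$ guaranteed by the hypothesis $\omega_{AE}\in\mathcal{F}(\eta^{p,k}_{AB})$. Here $E$ is isomorphic to $B$, and $\omega_{ABE}$ satisfies $\operatorname{Tr}_E[\omega_{ABE}]=\eta^{p,k}_{AB}$ together with $\operatorname{Tr}_B[\omega_{ABE}]=\omega_{AE}$. The task then reduces to verifying that the candidate $\Delta_A(\omega_{ABE})$ has the two required marginals.

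The key structural fact I would establish first is the invariance $\Delta_A(\eta^{p,k}_{AB})=\eta^{p,k}_{AB}$. The private-state component $\gamma^k_{AB}$ is supported entirely inside the non-erased subspace of $A$, since its definition involves only computational-basis states $|i\rangle$ with $0\le i\le k-1$ on $A_0$ together with states on the original shield system $A_1$. Consequently $\Pi_A \gamma^k_{AB} \Pi_A = \gamma^k_{AB}$ and $[e]_A \gamma^k_{AB} [e]_A = 0$. Conversely, the failure term $[e]_A\otimes[e]_B$ is annihilated by the $\Pi_A$ Kraus branch and fixed by the $[e]_A$ branch. Summing the two Kraus contributions recovers $\eta^{p,k}_{AB}$.

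With the invariance in hand, the two marginal checks are immediate, because $\Delta_A$ acts only on $A$ and therefore commutes with the partial traces $\operatorname{Tr}_B$ and $\operatorname{Tr}_E$. Explicitly, $\operatorname{Tr}_E[\Delta_A(\omega_{ABE})] = \Delta_A(\operatorname{Tr}_E[\omega_{ABE}]) = \Delta_A(\eta^{p,k}_{AB}) = \eta^{p,k}_{AB}$, and $\operatorname{Tr}_B[\Delta_A(\omega_{ABE})] = \Delta_A(\operatorname{Tr}_B[\omega_{ABE}]) = \Delta_A(\omega_{AE})$. Thus $\Delta_A(\omega_{ABE})$ is a valid tripartite extension of $\eta^{p,k}_{AB}$ whose $AE$-marginal is $\Delta_A(\omega_{AE})$, which by the definition of $\mathcal{F}(\cdot)$ in \eqref{eq:extension_set} places $\Delta_A(\omega_{AE})$ inside $\mathcal{F}(\eta^{p,k}_{AB})$.

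The only auxiliary fact worth spelling out is that $\Delta_A$, viewed as a map on the extended Hilbert space of $A$ (the original $d$-dimensional space plus the erasure line), is a legitimate quantum channel; this follows because $\Pi_A$ and $[e]_A$ are orthogonal projections summing to the identity on the extended space, so they form Kraus operators of a completely positive, trace-preserving map. There is no serious obstacle: the crux of the lemma is the short support-structure observation that $\gamma^k_{AB}$ lives inside the non-erased subspace of $A$, and the rest is one line of partial-trace algebra.
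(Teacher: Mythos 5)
Your proposal is correct and follows essentially the same route as the paper's proof: apply $\Delta_A$ to the tripartite extension $\omega_{ABE}$, use the invariance of $\eta^{p,k}_{AB}$ under $\Delta_A$ together with the commutation of $\Delta_A$ with the partial traces to verify both marginals. Your extra remarks spelling out why $\gamma^k_{AB}$ lives in the non-erased subspace and why $\Delta_A$ is a channel are details the paper leaves implicit, but the argument is the same.
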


\begin{proof}
    For every quantum state $\omega_{AE} \in \mathcal{F}\!\left(\eta^{p,k}_{AB}\right)$, there exists a state $\omega_{ABE}$ with the following marginals;
    \begin{align}
        \operatorname{Tr}_E\!\left[\omega_{ABE}\right] &= \eta^{p,k}_{AB},\\
        \operatorname{Tr}_B\!\left[\omega_{ABE}\right] &= \omega_{AE}.
    \end{align}

    Note that the doubly erased state $\eta^{p,k}_{AB}$ is invariant under the action of $\Delta_A$. Therefore, the following equalities hold:
    \begin{align}
        \Delta_A\!\left(\eta^{p,k}_{AB}\right) &= \Delta_A\!\left(\operatorname{Tr}_E\!\left[\omega_{ABE}\right]\right)\\
        &= \operatorname{Tr}_E\!\left[\Delta_A\!\left(\omega_{ABE}\right)\right]\\
        &= \eta^{p,k}_{AB}.
    \end{align}
    This implies that $\Delta\!\left(\omega_{ABE}\right)$ is also an extension of $\eta^{p,k}_{AB}$, and consequently, $\operatorname{Tr}_B\!\left[\Delta_A\!\left(\omega_{ABE}\right)\right] = \Delta_A\!\left(\omega_{AE}\right)$ is in the set~$\mathcal{F}\!\left(\eta^{p,k}_{AB}\right)$.
\end{proof}

\begin{lemma}\label{lem:dephas_ext_form}
    Let $\omega_{AE}$ be an arbitrary state in the set $\mathcal{F}\!\left(\eta^{p,k}_{AB}\right)$, where $\eta^{p,k}$ is a doubly erased state of the following form:
    \begin{equation}
        \eta^{p,k}_{AB} = p~\gamma^k_{AB} + (1-p)[e]_A\otimes[e]_B,
    \end{equation}
    with $\gamma^k_{AB}$ a bipartite private state holding $\log_2 k$ secret key bits. Here systems $E$ and $B$ are isomorphic to each other.
    The action of the partially dephasing channel $\Delta_A$, defined in \eqref{eq:dephas_channel_defn}, on the state $\omega_{AE}$ results in the following state:
    \begin{equation}
    \Delta_A\!\left(\omega_{AE}\right) = p~\sigma_{AE} + (1-p)[e]_A\otimes\tau_E,
    \end{equation}
    where $\sigma_{AE} \in \mathcal{F}\!\left(\gamma^k_{AB}\right)$ and $\tau_E \in \mathcal{S}\!\left(E\right)$.
\end{lemma}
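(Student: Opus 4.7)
The plan is to lift $\omega_{AE}$ to an extension $\omega_{ABE}$ of $\eta^{p,k}_{AB}$, exploit the block structure that $\omega_{ABE}$ inherits from $\eta^{p,k}_{AB}$, and verify that applying $\Delta_A$ (and then partial-tracing over $B$) annihilates the off-block terms and leaves exactly the two asserted pieces. Let $P_1 \coloneqq \Pi_A \otimes \Pi_B$ and $P_2 \coloneqq [e]_A \otimes [e]_B$ denote the two ``block'' projectors on $AB$ supporting $\eta^{p,k}_{AB}$.

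First, I would show that $\omega_{ABE}$ is block diagonal with respect to $\{P_1, P_2\}$ on $AB$. Since $\gamma^k_{AB}$ is supported on $\Pi_A \otimes \Pi_B$, a direct calculation gives $\operatorname{Tr}[(\Pi_A \otimes [e]_B)\eta^{p,k}_{AB}] = 0 = \operatorname{Tr}[([e]_A \otimes \Pi_B)\eta^{p,k}_{AB}]$. Pulling these equalities back through $\operatorname{Tr}_E$ and invoking positive semidefiniteness of $\omega_{ABE}$ forces $(\Pi_A \otimes [e]_B \otimes I_E)\omega_{ABE} = 0$ and $([e]_A \otimes \Pi_B \otimes I_E)\omega_{ABE} = 0$. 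Expanding $\omega_{ABE}$ via the resolutions $I_A = \Pi_A + [e]_A$ and $I_B = \Pi_B + [e]_B$ on both sides then yields $\omega_{ABE} = P_1 \omega_{ABE} P_1 + P_1 \omega_{ABE} P_2 + P_2 \omega_{ABE} P_1 + P_2 \omega_{ABE} P_2$.

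Next, I would apply $\Delta_A$ and use $\Delta_A \circ \operatorname{Tr}_B = \operatorname{Tr}_B \circ \Delta_A$ to write $\Delta_A(\omega_{AE}) = \operatorname{Tr}_B[\Delta_A(\omega_{ABE})]$. The dephasing $\Delta_A = \Pi_A(\cdot)\Pi_A + [e]_A(\cdot)[e]_A$ annihilates both cross terms, because their $A$-components are of the form $\Pi_A M [e]_A$ and $[e]_A M \Pi_A$, and the identity $\Pi_A [e]_A = 0$ then kills both summands of $\Delta_A$ on those pieces. Thus $\Delta_A(\omega_{ABE}) = P_1 \omega_{ABE} P_1 + P_2 \omega_{ABE} P_2$. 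For the $P_2$ block, a direct expansion gives $P_2 \omega_{ABE} P_2 = [e]_A \otimes [e]_B \otimes \alpha_E$, where $\alpha_E \coloneqq \langle e|_A\langle e|_B \omega_{ABE} |e\rangle_A |e\rangle_B \succeq 0$; tracing over $B$ produces $[e]_A \otimes \alpha_E$, and $\operatorname{Tr}[\alpha_E] = \operatorname{Tr}[P_2 \omega_{ABE}] = \operatorname{Tr}[P_2 \eta^{p,k}_{AB}] = 1-p$, which delivers the contribution $(1-p)[e]_A \otimes \tau_E$ with $\tau_E \in \mathcal{S}(E)$.

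Finally, for the $P_1$ block I would observe that $\omega'_{ABE} \coloneqq P_1 \omega_{ABE} P_1 / p$ is a bona fide state (trace one, since $\operatorname{Tr}[P_1 \omega_{ABE}] = \operatorname{Tr}[P_1 \eta^{p,k}_{AB}] = p$) whose reduction on $AB$ is $P_1 \eta^{p,k}_{AB} P_1 / p = \gamma^k_{AB}$, so $\omega'_{ABE}$ is an extension of $\gamma^k_{AB}$ on the isomorphic system $E \cong B$. Setting $\sigma_{AE} \coloneqq \operatorname{Tr}_B[\omega'_{ABE}]$ therefore gives $\sigma_{AE} \in \mathcal{F}(\gamma^k_{AB})$ and $\operatorname{Tr}_B[P_1 \omega_{ABE} P_1] = p\sigma_{AE}$. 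Adding the two contributions yields the claimed identity $\Delta_A(\omega_{AE}) = p\sigma_{AE} + (1-p)[e]_A \otimes \tau_E$. The main obstacle is the first step---upgrading the two trace-zero equations to the operator-vanishing equations $(\Pi_A \otimes [e]_B \otimes I_E)\omega_{ABE} = 0 = ([e]_A \otimes \Pi_B \otimes I_E)\omega_{ABE}$, which is what makes the block decomposition (and hence the dephasing argument) work; once that is in place, everything else reduces to straightforward bookkeeping.
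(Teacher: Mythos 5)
Your proof is correct, and it follows the same overall strategy as the paper's---lift $\omega_{AE}$ to an extension $\omega_{ABE}$ of $\eta^{p,k}_{AB}$, split off an erasure block from a private-state block, and identify each block by matching $AB$-marginals against $\eta^{p,k}_{AB}$---but the mechanism for obtaining the block structure is genuinely different. The paper never shows that $\omega_{ABE}$ itself is block diagonal: it works directly with $\Delta_A(\omega_{ABE})$, which is block diagonal in $\{\Pi_A,[e]_A\}$ by construction of the pinching, writes it as $X_{ABE}+[e]_A\otimes Y_{BE}$, reads off $\operatorname{Tr}_E[X_{ABE}]=p\,\gamma^k_{AB}$ and $\operatorname{Tr}_E[Y_{BE}]=(1-p)[e]_B$ from the marginal condition, and then invokes the fact that every extension of the pure state $[e]_B$ must be a product state to obtain the $[e]_B\otimes\tau_E$ form. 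You instead prove the stronger statement that $\omega_{ABE}$ is already block diagonal with respect to $\{\Pi_A\otimes\Pi_B,\,[e]_A\otimes[e]_B\}$ on $AB$, via the standard fact that $\operatorname{Tr}[P\rho]=0$ for a projector $P$ and $\rho\ge 0$ forces $P\rho=\rho P=0$ (your ``main obstacle'' is indeed just this textbook lemma, so there is no gap there); after that, $\Delta_A$ merely deletes the two cross terms, and the product form of the erasure block comes for free because $[e]_A\otimes[e]_B$ is rank one, so you never need the extensions-of-pure-states-are-product argument. Your route proves slightly more than is needed but makes the $P_2$ block transparent; the paper's route is shorter because it lets the pinching create the orthogonality rather than extracting it from the support of $\eta^{p,k}_{AB}$. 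Both arguments share the same cosmetic caveat of dividing by $p$ and $1-p$, handled by noting that the corresponding block simply vanishes when $p\in\{0,1\}$.
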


\begin{proof}
    For every quantum state $\omega_{AE} \in \mathcal{F}\!\left(\eta^{p,k}_{AB}\right)$, there exists a state $\omega_{ABE}$ with the following marginals;
    \begin{align}
        \operatorname{Tr}_E\!\left[\omega_{ABE}\right] &= \eta^{p,k}_{AB},\\
        \operatorname{Tr}_B\!\left[\omega_{ABE}\right] &= \omega_{AE}.
    \end{align}

    Recall the definition of the partially dephasing channel $\Delta_A$ from \eqref{eq:dephas_channel_defn}. 
    The state arising from the action of  $\Delta_A$ on the state $\omega_{ABE}$ can be decomposed as follows:
    \begin{equation}\label{eq:dephas_decomp}
        \Delta_{A}\!\left(\omega_{ABE}\right) = X_{ABE} + [e]_{A}\otimes Y_{BE},
    \end{equation}
    where $[e]_AX_{ABE} = X_{ABE}[e]_A = 0$. Since $\Delta_{A}\!\left(\omega_{ABE}\right)$ is a quantum state, $\Delta\!\left(\omega_{ABE}\right) \ge 0$, which implies the following operator inequalities:
    \begin{align}
        X_{ABE} &\ge 0,\label{eq:X_pos}\\
        [e]_A\otimes Y_{BE} &\ge 0,\\
        \Rightarrow Y_{BE} &\ge 0\label{eq:Y_pos},
    \end{align}
    where we have used the fact that $X_{ABE}$ and $[e]_A\otimes Y_{BE}$ are orthogonal to each other. 
    
    Consider the marginal of $\Delta_{A }\!\left(\omega_{ABE}\right)$ on systems $AB$, which is equal to the following:
    \begin{align}
        \operatorname{Tr}_{E}\!\left[\Delta_{A}\!\left(\omega_{ABE}\right)\right] &=\operatorname{Tr}_{E}\!\left[X_{ABE}\right] + \operatorname{Tr}_{E}\!\left[Y_{BE}\right]\otimes[e]_A.\label{eq:marg_A_decomp}
    \end{align}
    We can also evaluate the marginal by first tracing out the system~$E$ and then applying the partially dephasing channel; that is,
    \begin{align}
        \operatorname{Tr}_{E}\!\left[\Delta_{A}\!\left(\omega_{ABE}\right)\right] &= \Delta_A\left(\operatorname{Tr}_{E}\!\left[\omega_{ABE}\right]\right)\\
        &= p~\gamma^k_{AB} + (1-p)[e]_A\otimes[e]_B.\label{eq:marg_A_orig}
    \end{align}
    Comparing \eqref{eq:marg_A_decomp} and \eqref{eq:marg_A_orig}, and using the fact that $[e]_AX_{ABE} = 0$, we arrive at the following equalities:
    \begin{align}
        \operatorname{Tr}_{E}\!\left[X_{ABE}\right] &= p~\gamma^k_{AB},\label{eq:X_marg}\\
        \operatorname{Tr}_{E}\!\left[Y_{BE}\right] &= (1-p)[e]_B.\label{eq:Y_marg}
    \end{align}
    
    Recall from \eqref{eq:X_pos} and \eqref{eq:Y_pos} that $X_{ABE}$ and $Y_{BE}$ are positive semidefinite operators. Using \eqref{eq:X_marg}, we can further conclude that $\frac{X_{ABE}}{p}$ is a quantum state that extends $\gamma^k_{AB}$. Similarly, \eqref{eq:Y_marg} implies that $\frac{Y_{BE}}{1-p}$ is a quantum state that extends~$[e]_B$. 
    
    Any quantum state that extends $[e]_B$ is of the form $[e]_B\otimes\tau_E$, where $\tau_E$ is an arbitrary quantum state (and which can even be~$[e]_E$). Let us define the quantum state $\sigma_{AE} \coloneqq \frac{1}{p}\operatorname{Tr}_B\!\left[X_{ABE}\right]$. Since $\frac{1}{p}\operatorname{Tr}_E\!\left[X_{ABE}\right] = \gamma^k_{AB}$, $\sigma_{AE}$ is in the set $\mathcal{F}\!\left(\gamma^k_{AB}\right)$. The marginal of the state $\Delta_A\!\left(\omega_{ABE}\right)$ on systems $AE$ can now be written as follows:
    \begin{equation}
        \Delta_A\!\left(\omega_{AE}\right) = p~\sigma_{AE} + (1-p)[e]_A\otimes\tau_E,
    \end{equation}
    where $\tau_E \in \mathcal{S}\!\left(E\right)$ and $\sigma_{AE} \in \mathcal{F}\!\left(\gamma^k_{AB}\right)$.
\end{proof}

\medskip

\begin{proof}[Proof of Lemma~\ref{lem:min_unext_ent_2_eras_eq}]
The min-unextendible entanglement of $\eta^{p,k}_{AB}$ is equal to the following:
\begin{equation}\label{eq:min_unext_ent_2_eras_priv_st_stmnt}
    E^u_{\min}\!\left(\eta^{p,k}_{AB}\right) =  \inf_{\omega_{AB} \in \mathcal{F}\left(\eta^{p,k}\right)}\frac{1}{2}D_{\min}\!\left(\eta^{p,k}_{AB}\Big\Vert\omega_{AB}\right),
\end{equation}
where $\eta^{p,k}_{AB} = p~\gamma^k_{AB} + (1-p)[e]_A\otimes[e]_B$ with $\gamma^k_{AB}$ a bipartite private state holding $\log_2 k$ secret key bits.

Consider an arbitrary state $\omega_{AB} \in \mathcal{F}\!\left(\eta^{p,k}_{AB}\right)$. The min-relative entropy between $\eta^{p,k}_{AB}$ and $\omega_{AB}$ obeys the following inequality:
\begin{align}
    D_{\min}\!\left(\eta^{p,k}_{AB}\Big\Vert\omega_{AB}\right) &\ge D_{\min}\!\left(\Delta_A\!\left(\eta^{p,k}_{AB}\right)\Big\Vert\Delta_A\!\left(\omega_{AB}\right)\right)
    \label{eq:min_rel_ent_dec_delta_other}\\
    &= D_{\min}\!\left(\eta^{p,k}_{AB}\Big\Vert\Delta_A\!\left(\omega_{AB}\right)\right),\label{eq:min_rel_ent_dec_delta}
\end{align}
where $\Delta_A$ is defined in \eqref{eq:dephas_channel_defn}. The above inequality follows from the data-processing of min-relative entropy and the subsequent equality follows from the invariance of the doubly erased state under the action of the partially dephasing channel $\Delta_A$. We have shown in Lemma~\ref{lem:dephased_ext} that $\Delta_A\!\left(\omega_{AB}\right)$ also is in the set $\mathcal{F}\!\left(\eta^{p,k}_{AB}\right)$. Then the inequality in \eqref{eq:min_rel_ent_dec_delta_other}--\eqref{eq:min_rel_ent_dec_delta} implies that we can restrict the optimization in \eqref{eq:min_unext_ent_2_eras_priv_st_stmnt} to states $\omega_{AB}$ such that $\omega_{AB} = \Delta_A\!\left(\omega_{AB}\right)$. Alternatively, we can write the min-unextendible entanglement of the doubly erased state as follows:
\begin{equation}
    E^u_{\min}\!\left(\eta^{p,k}_{AB}\right) =  \inf_{\omega_{AB} \in \mathcal{F}\left(\eta^{p,k}\right)}\frac{1}{2}D_{\min}\!\left(\eta^{p,k}_{AB}\Big\Vert\Delta_A\!\left(\omega_{AB}\right)\right).
\end{equation}

We have shown in Lemma~\ref{lem:dephas_ext_form} that the action of the channel $\Delta_A$ on every state $\omega_{AB}\in \mathcal{F}\!\left(\eta^{p,k}_{AB}\right)$ results in a state of the following form:
\begin{equation}
    \Delta_A\!\left(\omega_{AB}\right) = p~\sigma_{AB} + (1-p)[e]_A\otimes\tau_B,
\end{equation}
where $\tau_B \in \mathcal{S}\!\left(B\right)$, and $\sigma_{AB} \in \mathcal{F}\!\left(\gamma^k_{AB}\right)$. This leads to the following equalities:
\begin{align}
    E^u_{\min}\!\left(\eta^{p,k}_{AB}\right)
    &=  \inf_{\omega_{AB} \in \mathcal{F}\left(\eta^{p,k}\right)}\frac{1}{2}D_{\min}\!\left(\eta^{p,k}_{AB}\Big\Vert\Delta_A\!\left(\omega_{AB}\right)\right)\\
    &=  \inf_{\sigma, \tau}- \frac{1}{2}\log_2 \operatorname{Tr}\!\left[\Pi^{\eta^{p,k}}\left(p~\sigma_{AB} + (1-p)[e]_{A}\otimes \tau_{B}\right)\right]\\
    &=  - \frac{1}{2}\log_2 \sup_{\sigma, \tau}\operatorname{Tr}\!\left[\Pi^{\eta^{p,k}}\left(p~\sigma_{AB} + (1-p)[e]_{A}\otimes \tau_{B}\right)\right],
\end{align}
where $\Pi^{\eta^{p,k}}$ is the projection onto the support of $\eta^{p,k}_{AB}$. The optimization in the second and third equalities is over all $\sigma_{AB} \in \mathcal{F}\!\left(\gamma^k_{AB}\right)$ and $\tau_{B}\in \mathcal{S}\!\left(B\right)$. The last equality follows from the monotonicity of the logarithm. 

Note that $\gamma^k_{AB}$ is orthogonal to $[e]_{A}\otimes[e]_{B}$, and $[e]_{A}\otimes[e]_{B}$ is a projection. Therefore, we can write the projection onto the support of $\eta^{p,k}_{AB}$ as the following sum:
\begin{equation}
    \Pi^{\eta^{p,k}}_{AB} = \Pi^{\gamma^{k}}_{AB} + [e]_{A}\otimes[e]_{B},
\end{equation}
where $\Pi^{\gamma^{k}}_{AB}$ is the projection onto the support of the bipartite private state $\gamma^k_{AB}$. Consequently,
\begin{align}
    E^u_{\min}\!\left(\eta^{p,k}_{AB}\right)
    &= -\frac{1}{2}\log_2 \sup_{\sigma,\tau} \left(p\operatorname{Tr}\!\left[\Pi^{\gamma^k}_{AB}\sigma_{AB}\right] 
    + (1-p)\operatorname{Tr}\!\left[\left([e]_{A}\otimes[e]_{B}\right)\left([e]_{A}\otimes\tau_{B}\right)\right]\right)\\
    &= -\frac{1}{2}\log_2 \sup_{\sigma,\tau} \left(p\operatorname{Tr}\!\left[\Pi^{\gamma^k}_{AB}\sigma_{AB}\right]+ (1-p)\operatorname{Tr}\!\left[[e]_{A}\otimes\left([e]_{B}\tau_{B}\right)\right]\right),\label{eq:min_unext_ent_double_eras_priv_expansion}
\end{align}
where the first equality follows from the fact that $\left([e]_A\otimes I_B\right)\sigma_{AB} = 0$ and $\Pi^{\gamma^k}_{AB}\left([e]_{A}\otimes\tau_{B}\right) = 0$ for every quantum state $\tau_{B}$. 

It is clear that $\tau_{B}$ should be $[e]_{B}$ in order for the second term inside the logarithm in \eqref{eq:min_unext_ent_double_eras_priv_expansion} to be non-zero. Therefore,
\begin{align}
    E^u_{\min}\!\left(\eta^{p,k}_{AB}\right)
    &= -\frac{1}{2}\log_2 \sup_{\sigma\in \mathcal{F}\left(\gamma^k_{AB}\right)}\left(p\operatorname{Tr}\!\left[\Pi^{\gamma^k}\sigma_{AB}\right] + 1-p\right)\\
    &= -\frac{1}{2}\log_2 \sup_{\sigma\in \mathcal{F}\left(\gamma^k_{AB}\right)}\left(p~ 2^{-D_{\min}\!\left(\gamma^k_{AB}\Big\Vert\sigma_{AB}\right)} + 1-p\right)\\
    &= -\frac{1}{2}\log_2 \left(p~2^{-\inf_{\sigma\in \mathcal{F}\left(\gamma^k_{AB}\right)}D_{\min}\!\left(\gamma^k_{AB}\Big\Vert\sigma_{AB}\right)}+ 1-p\right)\\
    &= -\frac{1}{2}\log_2 \left(p~2^{-2E^u_{\min}\!\left(\gamma^k_{AB}\right)}+ 1-p\right),
\end{align}
where the second equality follows from the definition of min-relative entropy, the third equality follows from the monotonicity of the exponential function, and the last equality follows from the definition of the min-unextendible entanglement of quantum states.

It has been shown in~\cite[Corollary 22]{WWW21} that the min-unextendible entanglement of a bipartite private state holding $\log_2 k$ bits of secrecy is not less than $\log_2 k$. That is,
\begin{equation}
    E^u_{\min}\!\left(\gamma^k_{AB}\right) \ge \log_2 k.
\end{equation}
Therefore,
\begin{equation}
    p~2^{-2E^u_{\min}\!\left(\gamma^k\right)} \le p~2^{-2\log_2 k} = \frac{p}{k^2}.
\end{equation}
From the monotonicity of the logarithm function, 
\begin{equation}
    \log_2 \left(p~2^{-2E^u_{\min}\!\left(\gamma^k_{AB}\right)}+ 1-p\right)
    \le \log_2\left(\frac{p}{k^2} + 1-p\right),
\end{equation}
and hence,
\begin{equation}
    E^u_{\min}\!\left(\eta^{p,k}_{AB}\right) \ge -\frac{1}{2}\log_2\left(\frac{p}{k^2} + 1-p\right) \quad \forall p\in (0,1), k\ge 2.
\end{equation}

When $p=0$, $\eta^{p,k}_{AB}$ is a separable state, and hence, its min-unextendible entanglement is equal to zero \cite[Proposition~3]{WWW21}. When $p=1$, $\eta^{p,k}_{AB}$ is a bipartite private state holding $\log_2 k$ bits of secrecy, and hence, its min-unextendible entanglement is not less than $\log_2 k$ as stated in~\cite[Corollary 22]{WWW21}. Thus, we conclude the statement of Lemma~\ref{lem:min_unext_ent_2_eras_eq}.
\end{proof}

\section{The set of super two-extendible states is convex but not closed}

\label{app:sup_2_ext_convex_open}

First we show that the set of super two-extendible states is convex. Consider two quantum states $\rho_{AB}$ and $\tau_{AB}$ that are in the set of super two-extendible states. Let $\overline{\rho}_{ABE}$ be an extension of $\rho_{AB}$ such that the following containment holds:
\begin{equation}\label{eq:rho_contained}
    \operatorname{supp}\!\left(\operatorname{id}_{E \to B}(\operatorname{Tr}_B\!\left[\overline{\rho}_{ABE}\right])\right) \subseteq \operatorname{supp}\!\left(\rho_{AB}\right),
\end{equation}
where both systems $E$ and $B$ are modeled by the same Hilbert space. Note that the existence of such an extension is guaranteed by the fact that $\rho_{AB}$ is in the set of super two-extendible states. Similarly, let $\overline{\tau}_{ABE}$ be an extension of $\tau_{AB}$ such that the following containment holds:
\begin{equation}\label{eq:tau_contained}
    \operatorname{supp}\!\left(\operatorname{id}_{E \to B}(\operatorname{Tr}_B\!\left[\overline{\tau}_{ABE}\right])\right) \subseteq \operatorname{supp}\!\left(\tau_{AB}\right).
\end{equation}
From now on, we will assume that the identity channel $\operatorname{id}_{E\to B}$ is implicit when we compare the supports of two operators in $\mathcal{S}(AB)$ and $\mathcal{S}(AE)$, respectively.

Consider an arbitrary convex combination of the states $\rho_{AB}$ and $\tau_{AB}$:
\begin{equation}
    \sigma_{AB} \coloneqq p~\rho_{AB} + (1-p)\tau_{AB},
\end{equation}
where $p \in [0,1]$.
The following state is a valid extension of~$\sigma_{AB}$:
\begin{equation}
    \omega_{ABE} \coloneqq p~\overline{\rho}_{ABE} + (1-p)\overline{\tau}_{ABE},
\end{equation}
because
\begin{equation}
    \operatorname{Tr}_E\!\left[\omega_{ABE}\right] = p~\rho_{AB} + (1-p)\tau_{AB} = \sigma_{AB}.
\end{equation}
The other relevant marginal of $\omega_{ABE}$ can be expressed as follows:
\begin{equation}
    \operatorname{Tr}_{B}\!\left[\omega_{ABE}\right] = p\operatorname{Tr}_{B}\!\left[\overline{\rho}_{ABE}\right] + (1-p)\operatorname{Tr}_{B}\!\left[\overline{\tau}_{ABE}\right].
\end{equation}

Note that $\sigma_{AB}$ is in the set of super two-extendible states for $p=0$ and $p=1$ by the assumption that $\rho_{AB}$ and $\tau_{AB}$ are super two-extendible states. For all $p\in (0,1)$, the following holds:
\begin{align}
    \operatorname{supp}\!\left(\operatorname{Tr}_B\!\left[\omega_{ABE}\right]\right)
    &= \operatorname{supp}\!\left(\operatorname{Tr}_B\!\left[\overline{\rho}_{ABE}\right]\right)\cup\operatorname{supp}\!\left(\operatorname{Tr}_B\!\left[\overline{\tau}_{ABE}\right]\right)\\
    &\subseteq \operatorname{supp}\!\left(\operatorname{Tr}_E\!\left[\overline{\rho}_{ABE}\right]\right)\cup\operatorname{supp}\!\left(\operatorname{Tr}_E\!\left[\overline{\tau}_{ABE}\right]\right)\\
    &= \operatorname{supp}\!\left(\operatorname{Tr}_E\!\left[\omega_{ABE}\right]\right)\\
    &= \operatorname{supp}\!\left(\sigma_{AB}\right),
    \label{eq:omega_contained}
\end{align}
where the containment of the sets follows from \eqref{eq:rho_contained} and \eqref{eq:tau_contained}. 

Since the support of $\operatorname{Tr}_B\!\left[\omega_{ABE}\right]$ is in the support of $\operatorname{Tr}_E\!\left[\omega_{ABE}\right]$, we conclude that $\operatorname{Tr}_E\!\left[\omega_{ABE}\right] = \sigma_{AB}$ is a super two-extendible state for all $p\in (0,1)$ as well. Therefore, the state $\sigma_{AB}$, which is a convex combination of two arbitrary super two-extendible states, is super two-extendible for every $p\in [0,1]$, justifying that the set of super two-extendible states is convex.

To show that the set of super two-extendible states is not closed, let us consider the example of erased states defined in \eqref{eq:eras_st_defn}. The erased state $\widetilde{\Phi}^{p,k}_{AB}$ is a special case of an erased private state $\widetilde{\eta}^{p,k}_{AB}$. Appendix~\ref{app:eras_min_unext_ent_eq_0} establishes that an erased private state is super two-extendible for every $p\in [0,1)$. This implies that an erased state $\widetilde{\Phi}^{p,k}_{AB}$ is also super two-extendible for every $p\in [0,1)$. However, for $p=1$, $\widetilde{\Phi}^{p,k}_{AB}$ is a maximally entangled state of Schmidt rank $k$, which is not super two-extendible. In fact, this state has a min-unextendible entanglement of $\log_2 k$. As such, the maximally entangled state is a limit point of the set of super two-extendible states, but it is not a super two-extendible state itself. Hence, the set of super two-extendible states is not closed.

\section{Proof of Proposition~\ref{prop:ads_min_unext_ent_eq_0}} \label{app:ads_min_unext_ent_eq_0}

We restate Proposition~\ref{prop:ads_min_unext_ent_eq_0} here for the reader's convenience.

\emph{The min-unextendible entanglement of a quantum state is equal to zero if and only if it is super two-extendible. }

\medskip

For every super two-extendible state $\rho_{AB}$, there exists a quantum state $\sigma_{AB}\in \mathcal{F}\!\left(\rho_{AB}\right)$ such that $\operatorname{supp}\!\left(\sigma_{AB}\right)\subseteq \operatorname{supp}\!\left(\rho_{AB}\right)$. This implies that $\Pi^{\rho}_{AB}\sigma_{AB}  = \sigma_{AB}$, where $\Pi^{\rho}_{AB}$ is the projection onto the support of $\rho_{AB}$. The min-unextendible entanglement of $\rho_{AB}$ can then be bounded as follows:
\begin{align}
    0 &\le E^u_{\min}\!\left(\rho_{AB}\right)\\
    &=\inf_{\tau \in \mathcal{F}\left(\rho\right)} -\frac{1}{2}\log_2\operatorname{Tr}\!\left[\Pi^{\rho}_{AB}\tau_{AB}\right]\\
    &\le -\frac{1}{2}\log_2\operatorname{Tr}\!\left[\Pi^{\rho}_{AB}\sigma_{AB}\right]\\
    &= -\frac{1}{2}\log_2\operatorname{Tr}\!\left[\sigma_{AB}\right]\\
    &= 0,
\end{align}
where the first inequality follows from \eqref{eq:min-unext-non-neg}, the first equality follows from the definition of min-unextendible entanglement of a bipartite state, and the last equality follows from the fact that $\sigma_{AB}$ is a quantum state with unit trace. Therefore, the min-unextendible entanglement of a super two-extendible state is equal to zero.

We now establish the opposite implication. Consider an arbitrary quantum state $\rho_{AB}$ such that $E^u_{\min}\!\left(\rho_{AB}\right) = 0$. This is true only if there exists a quantum state $\sigma_{AB}\in \mathcal{F}\!\left(\rho_{AB}\right)$ such that $\operatorname{Tr}\!\left[\Pi^{\rho}_{AB}\sigma_{AB}\right] = 1$, which follows from the definition of the min-unextendible entanglement of a bipartite state. From the gentle measurement lemma~\cite[Lemma~5]{1023345}, it is known for a state $\tau$ and a projector $P$ that
\begin{equation}
    \frac{1}{2} \left \| \tau - P \tau P\right\|_1 \leq \sqrt{1 - \operatorname{Tr}[P \tau]}.
\end{equation}
Applying this to our case, we conclude that 
\begin{equation} 
\left\|\Pi^{\rho}_{AB}\sigma_{AB}  \Pi^{\rho}_{AB} - \sigma_{AB} \right\|_1 = 0,
\end{equation}
which implies that $\Pi^{\rho}_{AB}\sigma_{AB}  \Pi^{\rho}_{AB} = \sigma_{AB}$. This in turns implies that the support of $\sigma_{AB}$ is in the support of $\rho_{AB}$. We finally conclude that the min-unextendible entanglement of $\rho_{AB}$ is equal to zero only if there exists a quantum state $\sigma_{AB}\in \mathcal{F}\!\left(\rho_{AB}\right)$ such that $\operatorname{supp}\!\left(\sigma_{AB}\right) \subseteq \operatorname{supp}\!\left(\rho_{AB}\right)$, which is precisely the definition of a super two-extendible state.

\section{Proof of Theorem~\ref{theo:prob_distill_ads_eq_0}}

\label{app:prob_distill_eras_st_eq_0_proof}

    Let us restate Theorem~\ref{theo:prob_distill_ads_eq_0} for the reader's convenience.
    
    \emph{The probabilistic one-way distillable secret key of a super two-extendible state is equal to zero.}
    
    \medskip

       Consider a one-way LOCC channel $\mathcal{L}^{\to}_{A^nB^n\to X_AA'B'}$ that acts on $n$ copies of a super two-extendible state $\rho_{AB}$ as follows:
    \begin{equation}\label{eq:distill_st_eras_st}
        \mathcal{L}^{\to}\!\left(\left(\rho_{AB}\right)^{\otimes n}\right) = q[1]_{X_A}\otimes \gamma^{k'}_{A'B'} + (1-q)[0]_{X_A}\otimes \frac{I_{A'B'}}{d_{A'}d_{B'}},
    \end{equation}
    for some  $q\in [0,1]$ and integer $k'\geq 2$. Then 
    \begin{align}
        \label{eq:proof-first} 0 & \leq 
        E^u_{\min}\!\left(q[1]_{X_A}\otimes \gamma^{k'}_{A'B'} + (1-q)[0]_{X_A}\otimes \frac{I_{A'B'}}{d_{A'}d_{B'}}\right) \\
        & = E^u_{\min}\!\left(\eta^{q,k'}_{A'B'}\right) \\
        &= E^u_{\min}\!\left(\mathcal{L}^{\to}\!\left(\left(\rho_{AB}\right)^{\otimes n}\right)\right) \\
        &\le E^u_{\min}\!\left(\left(\rho_{AB}\right)^{\otimes n}\right)\\
        &= nE^u_{\min}\!\left(\rho_{AB}\right)\\
        &= 0,
        \label{eq:proof-last} 
    \end{align}
    The first inequality follows from \eqref{eq:min-unext-non-neg}. 
    The first equality follows from  Remark~\ref{rem:min_uenxt_ent_eq_tf} (the min-unextendible entanglement of the state in \eqref{eq:distill_st_eras_st} is the same as the min-unextendible entanglement of the doubly erased private state $\eta^{q,k'}_{A'B'}$).
    The second inequality follows from the monotonicity of min-unextendible entanglement under one-way LOCC channels, the third equality follows from the additivity of min-unextendible entanglement (recall \eqref{eq:min-unext-additive}), and the final equality follows from Proposition~\ref{prop:ads_min_unext_ent_eq_0}.
    
    Inspecting \eqref{eq:proof-first} and \eqref{eq:proof-last}, it follows that $E^u_{\min}\!\left(\eta^{q,k'}_{A'B'}\right) = 0$. By applying Lemma~\ref{lem:min_unext_ent_2_eras_eq}, using the fact that $-\frac{1}{2}\log_2\!\left(\frac{q}{k'^2} + 1-q\right) \geq 0$ for all $q \in [0,1]$ and integer $k' \geq 2$, and solving the equation $0 = -\frac{1}{2}\log_2\!\left(\frac{q}{k'^2} + 1-q\right)$, along with the assumption that $k' \geq 2$, we conclude that $q = 0$. As such, the expected number of secret key bits distilled by the channel $\mathcal{L}^{\to}_{A^nB^n\to X_AA'B'}$ from $n$ copies of $\rho_{AB}$ is equal to zero. This conclusion  holds for every one-way LOCC channel $\mathcal{L}^{\to}_{A^nB^n\to X_AA'B'}$, every super two-extendible state $\rho_{AB}$, and every $n\in \mathbb{N}$. Hence, the one-shot probabilistic one-way distillable secret key of $\left(\rho_{AB}\right)^{\otimes n}$ is equal to zero for all $n\in \mathbb{N}$, and consequently, the probabilistic one-way distillable secret key of every super two-extendible state is equal to zero.

\section{Proof of Proposition~\ref{lem:eras_min_unext_ent_eq_0}}

\label{app:eras_min_unext_ent_eq_0}

Let us restate Proposition~\ref{lem:eras_min_unext_ent_eq_0} for the reader's convenience.

\emph{For all $p \in [0,1)$ and every integer $k \geq 2$, the erased private state $\widetilde{\eta}^{p,k}_{AB}$ is a super two-extendible state and thus has probabilistic one-way distillable secret key equal to zero.}

\medskip

    Consider the following extension of an erased private state~$\widetilde{\eta}^{p,k}_{AB}$:
    \begin{equation}
        \omega_{ABE} \coloneqq p~\gamma^{k}_{AB}\otimes[e]_{E} + (1-p)\gamma^{k}_{AE}\otimes[e]_{B},
    \end{equation}
    where $E$ is isomorphic to $B$. The two relevant marginals of the state $\omega_{ABE}$ are 
    \begin{align}
        \operatorname{Tr}_{E}\!\left[\omega_{ABE}\right] &= p~\gamma^k_{AB} + (1-p)\gamma^{k}_{A}\otimes [e]_{B} = \widetilde{\eta}^{p,k}_{AB},\\
        \operatorname{Tr}_{B}\!\left[\omega_{ABE}\right] &= (1-p)\gamma^k_{AE} + p~\gamma^{k}_{A}\otimes [e]_{E} = \widetilde{\eta}^{1-p,k}_{AE},
    \end{align}
    where $\gamma^k_A \coloneqq \operatorname{Tr}_B\!\left[\gamma^k_{AB}\right] = \operatorname{Tr}_E\!\left[\gamma^k_{AE}\right]$.

    Note that the min-relative entropy between two quantum states, $\rho$ and $\sigma$, is equal to zero if $\operatorname{supp}(\rho) = \operatorname{supp}(\sigma)$. We know that $\operatorname{supp}(\widetilde{\eta}^{p,k}_{AB}) = \operatorname{supp}(\widetilde{\eta}^{1-p,k}_{AB})$ for all $p \in (0,1)$ and every integer $k\ge 2$. Therefore, we arrive at the following relations:
    \begin{equation}
        E^u_{\min}\!\left(\widetilde{\eta}^{p,k}_{AB}\right) \le \frac{1}{2}D_{\min}\!\left(\widetilde{\eta}^{p,k}_{AB}\Big\Vert \widetilde{\eta}^{1-p,k}_{AB}\right) = 0 \quad \forall p\in (0,1), k\ge 2.
    \end{equation}
    The min-unextendible entanglement of a bipartite state is a non-negative quantity since the underlying divergence is non-negative. Therefore,
    \begin{equation}
        E^u_{\min}\!\left(\widetilde{\eta}^{p,k}_{AB}\right) = 0 \quad \forall p\in (0,1), k\ge 2.
    \end{equation}

    Finally, the erased private state $\widetilde{\eta}^{p,k}_{AB}$ is a separable state for $p=0$, and the min-unextendible entanglement of a separable state is equal to zero \cite[Proposition~3]{WWW21}. Thus, we conclude the statement of Proposition~\ref{lem:eras_min_unext_ent_eq_0}.

\section{Proof of Corollary~\ref{cor:dist_full_rank}}\label{app:dist_full_rank}

Let us restate Corollary~\ref{cor:dist_full_rank} for the reader's convenience.

\emph{All full-rank states are super two-extendible
states, and the probabilistic one-way distillable secret
key of such quantum states is equal to zero.}

\medskip

Consider a bipartite quantum state that has a full-rank density operator $\rho_{AB}$. The quantum state $\rho_{AB}\otimes\frac{I_{E}}{d_E}$ is a valid extension of the state, and its marginal $\frac{1}{d_E}\operatorname{Tr}_{B}\!\left[\rho_{AB}\otimes I_E\right]$ is in the set $\mathcal{F}\!\left(\rho_{AB}\right)$ when $E\cong B$. The min-unextendible entanglement of the state $\rho_{AB}$ obeys the following inequality by definition:
    \begin{align}
        E^u_{\min}\!\left(\rho_{AB}\right) &\le  \frac{1}{2}D_{\min}\!\left(\rho_{AB}\Big\Vert\frac{1}{d_E}\operatorname{Tr}_{B}\!\left[\rho_{AB}\otimes I_E\right]\right)\\
        &= \frac{1}{2}D_{\min}\!\left(\rho_{AB}\Big\Vert\rho_A\otimes \frac{I_E}{d_E}\right)\\
        &= -\frac{1}{2}\log_2\!\left(\operatorname{Tr}\!\left[\Pi^\rho_{AB}\left(\rho_{A}\otimes \frac{I_{E}}{d_E}\right)\right]\right),
    \end{align}
    where $\Pi^\rho_{AB}$ is the projection onto the support of $\rho_{AB}$, and $\rho_A \coloneqq \operatorname{Tr}_B\!\left[\rho_{AB}\right]$. Since $\rho_{AB}$ is a full-rank density operator, the projection onto the support of $\rho_{AB}$ is the identity operator; that is,
    \begin{equation}
        \Pi^\rho_{AB} = I_{AB}.
    \end{equation}
    Therefore,
    \begin{align}
        0 &\le E^u_{\min}\!\left(\rho_{AB}\right)\label{eq:min_unext_ent_full_rank_ge_0}\\
        &\le -\frac{1}{2}\log_2\!\left(\frac{1}{d_E}\operatorname{Tr}\!\left[I_{AB}\left(\rho_A\otimes I_{E}\right)\right]\right)\\
        &= -\frac{1}{2}\log_2\!\left(\operatorname{Tr}\!\left[\rho_A\right]\right)\\
        &= 0, \label{eq:min_unext_ent_full_rank_le_0}
    \end{align}
    where the first inequality follows from the non-negativity of the min-unextendible entanglement of states. The first equality follows from the the fact that $E\cong B$, and the last equality follows from the fact that $\rho_{A}$ is a quantum state with unit trace. Since \eqref{eq:min_unext_ent_full_rank_ge_0} and \eqref{eq:min_unext_ent_full_rank_le_0} hold for every full-rank state $\rho_{AB}$, we conclude that the min-unextendible entanglement of all full-rank states is equal to zero; that is, all full-rank states are super two-extendible. 
    
    As a consequence of Theorem~\ref{theo:prob_distill_ads_eq_0}, all full-rank states being contained in the set of super two-extendible states implies that the probabilistic one-way distillable secret key of all full-rank states is also equal to zero.

\section{Coherent information of Werner states}\label{app:coherent_info_Werner}

    \begin{proposition}
        Consider a $d$-dimensional Werner state defined as follows:
        \begin{equation}\label{eq:werner_st_full_defn}
            W^{p,d}_{AB}\coloneqq p\frac{I_{AB} + F_{AB}}{d(d+1)} + (1-p)\frac{I_{AB}-F_{AB}}{d(d-1)},
        \end{equation}
        where $p\in [0,1]$, $d = d_A = d_B \in \{2, 3, 4, \ldots \}$, and $F_{AB}$ is the swap operator defined as $F_{AB} \coloneqq \sum_{i,j=0}^{d-1}|i\rangle\!\langle j|_A\otimes |j\rangle\!\langle i|_B$. The coherent information of the state $W^{p,d}_{AB}$ is equal to the following quantity:
        \begin{equation}\label{eq:coherent_info_werner}
		I\!\left(A\rangle B\right)_{W^{p,d}} = 1 - h_2(p) 
 - p\log_2\!\left({d+1}\right) - (1-p)\log_2\!\left({d-1}\right).
	\end{equation}
    \end{proposition}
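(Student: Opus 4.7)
The plan is to compute the coherent information $I(A\rangle B)_{W^{p,d}} = H(B)_{W^{p,d}} - H(AB)_{W^{p,d}}$ directly from the spectral structure of the Werner state, using the swap-symmetry properties of $F_{AB}$. First I would compute the marginal on $B$ using the standard identities $\operatorname{Tr}_A[I_{AB}] = d\, I_B$ and $\operatorname{Tr}_A[F_{AB}] = I_B$. Substituting into \eqref{eq:werner_st_full_defn} gives
\begin{equation}
    W^{p,d}_B = p\,\frac{d\,I_B + I_B}{d(d+1)} + (1-p)\,\frac{d\,I_B - I_B}{d(d-1)} = \frac{I_B}{d},
\end{equation}
so that $H(B)_{W^{p,d}} = \log_2 d$.

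Next I would diagonalize $W^{p,d}_{AB}$ by recognizing that $\frac{I_{AB}+F_{AB}}{2}$ and $\frac{I_{AB}-F_{AB}}{2}$ are the projectors onto the symmetric and antisymmetric subspaces of $A \otimes B$, whose dimensions are $\frac{d(d+1)}{2}$ and $\frac{d(d-1)}{2}$, respectively. Consequently $\frac{I_{AB}+F_{AB}}{d(d+1)}$ and $\frac{I_{AB}-F_{AB}}{d(d-1)}$ are the maximally mixed states on these two orthogonal subspaces. Thus the eigenvalues of $W^{p,d}_{AB}$ are $\tfrac{2p}{d(d+1)}$ with multiplicity $\tfrac{d(d+1)}{2}$ and $\tfrac{2(1-p)}{d(d-1)}$ with multiplicity $\tfrac{d(d-1)}{2}$, yielding
\begin{equation}
    H(AB)_{W^{p,d}} = -p\log_2\!\frac{2p}{d(d+1)} - (1-p)\log_2\!\frac{2(1-p)}{d(d-1)}.
\end{equation}

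Then I would just split the logarithms and collect terms: the $\log_2 2 = 1$ contributions sum to $-1$, the $-p\log_2 p - (1-p)\log_2(1-p)$ contribution is $h_2(p)$, and the $\log_2 d$ factors combine to $\log_2 d$ using $p+(1-p)=1$, leaving
\begin{equation}
H(AB)_{W^{p,d}} = -1 + h_2(p) + \log_2 d + p\log_2(d+1) + (1-p)\log_2(d-1).
\end{equation}
Subtracting from $H(B)_{W^{p,d}} = \log_2 d$ immediately produces \eqref{eq:coherent_info_werner}. There is no real conceptual obstacle here; the only thing to be careful about is keeping track of the logarithm-of-a-product expansions so that the $\log_2 d$ cancellation and the $h_2(p)-1$ grouping appear cleanly.
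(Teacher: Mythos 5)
Your proposal is correct and follows essentially the same route as the paper: both compute the $B$ marginal as $I_B/d$ via $\operatorname{Tr}_A[F_{AB}]=I_B$, and both obtain $H(AB)$ from the spectral decomposition of $W^{p,d}_{AB}$ into the symmetric and antisymmetric projectors (your explicit eigenvalue/multiplicity count is the same computation the paper performs using $\log_2(\alpha\Pi)=\Pi\log_2\alpha$ and the traces of the projectors). The final bookkeeping of logarithms matches the paper's as well.
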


    \begin{proof}
        The coherent information of a bipartite state $\rho_{AB}$ is defined as follows~\cite{SN96}:
	   \begin{equation}\label{eq:coherent_info_defn}
		  I\!\left(A\rangle B\right)_{\rho} \coloneqq H\!\left(\operatorname{Tr}_A\!\left[\rho_{AB}\right]\right) - H\!\left(\rho_{AB}\right),
	   \end{equation}
	   where $H\!\left(\rho\right)$ is the von Neumann entropy~\cite{Neumann27}, defined as
	   \begin{equation}
		  H\!\left(\rho\right) \coloneqq -\operatorname{Tr}\!\left[\rho\log_2\rho\right].
	   \end{equation} 
        
        The $d$-dimensional Werner state  defined in \eqref{eq:werner_st_full_defn} can be rearranged into the following form:
	\begin{equation}
		W^{p,d}_{AB} = p\frac{2}{d(d+1)}\Pi^{\operatorname{sym}}_{AB} + (1-p)\frac{2}{d(d-1)}\Pi^{\operatorname{asym}}_{AB},
	\end{equation}
	where $\Pi^{\operatorname{sym}}$ and $\Pi^{\operatorname{asym}}$ are the projections onto the symmetric and asymmetric subspaces of the underlying Hilbert space, respectively. This is easily seen from the fact that the symmetric and asymmetric projection operators for a bipartite system can be written as follows:
	\begin{align}
		\Pi^{\operatorname{sym}}_{AB} &= \frac{1}{2}\left(I_{AB} + F_{AB}\right),\label{eq:sym_proj_IF}\\
		\Pi^{\operatorname{asym}}_{AB} &= \frac{1}{2}\left(I_{AB} - F_{AB}\right),\label{eq:asym_proj_IF}
	\end{align}
	where $F_{AB}$ is the swap operator. The symmetric and asymmetric projection operators are orthogonal to each other:
	\begin{equation}
		\Pi^{\operatorname{sym}}_{AB}\Pi^{\operatorname{asym}}_{AB} = 0.
	\end{equation}
	
	Let us first evaluate the von Neumann entropy of $\operatorname{Tr}_A\!\left[W^{p,d}_{AB}\right]$. Note that 
	\begin{equation}
		\operatorname{Tr}_{A}\!\left[F_{AB}\right] = I_{B}.
	\end{equation}
	Therefore,
		\begin{align}
			\operatorname{Tr}_A\!\left[\Pi^{\operatorname{sym}}_{AB}\right] &= \frac{d+1}{2}I_{B},\label{eq:part_tr_sym_op}\\
			\operatorname{Tr}_A\!\left[\Pi^{\operatorname{asym}}_{AB}\right] &= \frac{d-1}{2}I_{B},\label{eq:part_tr_asym_op}
		\end{align}
		and consequently,
		\begin{equation}
			\operatorname{Tr}_{A}\!\left[W^{p,d}_{AB}\right] = \frac{p}{d}I_{B} + \frac{1-p}{d}I_{B} = \frac{I_B}{d},
		\end{equation}
		which is the maximally mixed state on system $B$. The von Neumann entropy of a $d$-dimensional maximally mixed state is equal to $\log_2 d$. Therefore,
		\begin{equation}\label{eq:VN_ent_werner_partial}
			H\!\left(\operatorname{Tr}_A\!\left[W^{p,d}_{AB}\right]\right) = \log_2 d.
		\end{equation}
	
	Now let us evaluate the von Neumann entropy of the Werner state $W^{p,d}_{AB}$:
	\begin{equation}
		H\!\left(W^{p,d}_{AB}\right) = -\operatorname{Tr}\!\left[W^{p,d}_{AB}\log_2 W^{p,d}_{AB}\right].
	\end{equation}
	Since $\Pi^{\operatorname{sym}}_{AB}$ and $\Pi^{\operatorname{asym}}_{AB}$ are orthogonal to each other, we can reduce the von Neumann entropy of the Werner state into the following form:
	\begin{multline}
		H\!\left(W^{p,d}_{AB}\right) = -\operatorname{Tr}\!\left[p\frac{2}{d(d+1)}\Pi^{\operatorname{sym}}_{AB}\log_2\!\left(p\frac{2}{d(d+1)} \Pi^{\operatorname{sym}}_{AB}\right)\right]
		\\
  -\operatorname{Tr}\!\left[\frac{2(1-p)}{d(d-1)}\Pi^{\operatorname{asym}}_{AB}\log_2\!\left(\frac{2(1-p)}{d(d-1)} \Pi^{\operatorname{asym}}_{AB}\right)\right].
	\end{multline}
	The eigenvalues of a projection operator are either equal to one or zero. Hence, for an arbitrary scalar $\alpha$ and an arbitrary projection operator $\Pi^x$,
	\begin{equation}\label{eq:proj_op_log_id}
		\log_2\!\left(\alpha \Pi^x\right) = \Pi^x\log_2\alpha.
	\end{equation}	 
	Therefore,
	\begin{multline}\label{eq:VN_ent_werner_simp}
		H\!\left(W^{p,d}_{AB}\right) = -p\frac{2}{d(d+1)}\log_2\!\left(p\frac{2}{d(d+1)}\right)\operatorname{Tr}\!\left[\Pi^{\operatorname{sym}}_{AB}\right] \\
  - (1-p)\frac{2}{d(d-1)}\log_2\!\left((1-p)\frac{2}{d(d-1)}\right)\operatorname{Tr}\!\left[\Pi^{\operatorname{asym}}_{AB}\right].
	\end{multline}
	From \eqref{eq:part_tr_sym_op} and \eqref{eq:part_tr_asym_op}, it is straightforward to see the following equalities:	
	\begin{align}
		\operatorname{Tr}\!\left[\Pi^{\operatorname{sym}}_{AB}\right] &= \frac{d(d+1)}{2},\\
		\operatorname{Tr}\!\left[\Pi^{\operatorname{asym}}_{AB}\right] &= \frac{d(d-1)}{2}.
	\end{align}
	Substituting these values in \eqref{eq:VN_ent_werner}, we arrive at the following expression for the von Neumann entropy of the Werner state:
	\begin{align}
		H\!\left(W^{p,d}_{AB}\right) &= -p\log_2\!\left(p\frac{2}{d(d+1)}\right) - (1-p)\log_2\!\left((1-p)\frac{2}{d(d-1)}\right)\\
        &= h_2(p) + p\log_2(d(d+1)) - p +(1-p)\log_2(d(d-1)) -(1-p)\\
        &= \log_2 d + h_2(p) - 1 +p\log_2 (d+1)+ (1-p)\log_2 (d-1),\label{eq:VN_ent_werner}
	\end{align}
    where $h_2(p)\coloneqq -p\log_2p - (1-p)\log_2(1-p)$.
	Combining \eqref{eq:coherent_info_defn}, \eqref{eq:VN_ent_werner_partial}, and \eqref{eq:VN_ent_werner}, we conclude \eqref{eq:coherent_info_werner}.
    \end{proof}

\section{Coherent information of isotropic states}\label{app:coherent_info_iso_proof}

\begin{proposition}
    Consider a $d$-dimensional isotropic state $\zeta^{F,d}_{AB}$ which is defined as follows:
    \begin{equation}
        \zeta^{F,d}_{AB} = F~\Phi^d_{AB} + (1-F)\frac{I_{AB}-\Phi^d_{AB}}{d^2-1},
    \end{equation}
    where $F\in [0,1]$ and $d = d_A = d_B \in \{2,3,4, \ldots\}$. The coherent information of the state $\zeta^{F,d}_{AB}$ is equal to the following quantity:
    \begin{equation}\label{eq:coherent_info_iso}
        I\!\left(A\rangle B\right)_{\zeta^{F,d}} = \log_2 d - h_2(F) - (1-F)\log_2(d^2-1),
    \end{equation}
    where $h_2(x) \coloneqq -x\log_2 x - (1-x)\log_2(1-x)$.
\end{proposition}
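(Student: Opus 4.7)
The plan is to mirror the strategy used for the Werner-state calculation: evaluate the two von Neumann entropies appearing in the coherent information $I(A\rangle B)_{\zeta^{F,d}} = H(\operatorname{Tr}_A[\zeta^{F,d}_{AB}]) - H(\zeta^{F,d}_{AB})$ separately, exploiting the fact that $\zeta^{F,d}_{AB}$ is a convex combination of two orthogonal projectors, namely $\Phi^d_{AB}$ and its complement $I_{AB} - \Phi^d_{AB}$.

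First I would compute the reduced state on $B$. Using $\operatorname{Tr}_A[\Phi^d_{AB}] = I_B/d$ and $\operatorname{Tr}_A[I_{AB} - \Phi^d_{AB}] = dI_B - I_B/d = \frac{d^2-1}{d}I_B$, a direct substitution gives
\begin{equation}
    \operatorname{Tr}_A[\zeta^{F,d}_{AB}] = \frac{F}{d}I_B + \frac{1-F}{d^2-1}\cdot\frac{d^2-1}{d}I_B = \frac{I_B}{d},
\end{equation}
so $H(\operatorname{Tr}_A[\zeta^{F,d}_{AB}]) = \log_2 d$.

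Next I would diagonalize $\zeta^{F,d}_{AB}$. Since $\Phi^d_{AB}$ is a rank-one projector and $I_{AB}-\Phi^d_{AB}$ is the rank-$(d^2-1)$ projector onto its orthogonal complement, the two terms commute and are mutually orthogonal. Hence $\zeta^{F,d}_{AB}$ has eigenvalue $F$ with multiplicity $1$ and eigenvalue $(1-F)/(d^2-1)$ with multiplicity $d^2-1$. As in the Werner calculation, I would invoke the identity $\log_2(\alpha \Pi) = \Pi\log_2\alpha$ for a projector $\Pi$ and scalar $\alpha>0$ to obtain
\begin{equation}
    H(\zeta^{F,d}_{AB}) = -F\log_2 F - (1-F)\log_2\frac{1-F}{d^2-1} = h_2(F) + (1-F)\log_2(d^2-1).
\end{equation}

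Subtracting these two entropies yields \eqref{eq:coherent_info_iso}. The computation is essentially routine; the only mild subtlety, as in the Werner case, is tracking the orthogonality of the two projectors in order to split the logarithm of the sum into a sum of logarithms. No part of the argument should present a genuine obstacle.
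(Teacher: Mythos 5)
Your proposal is correct and follows essentially the same route as the paper's proof: both compute the maximally mixed marginal to get $\log_2 d$, then use the orthogonality of $\Phi^d_{AB}$ and $I_{AB}-\Phi^d_{AB}$ (equivalently, the explicit eigenvalues $F$ with multiplicity $1$ and $(1-F)/(d^2-1)$ with multiplicity $d^2-1$) together with the identity $\log_2(\alpha\Pi)=\Pi\log_2\alpha$ to evaluate $H(\zeta^{F,d}_{AB})$, and subtract. No gaps.
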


\begin{proof}
    Recall the definition of coherent information of a bipartite state $\rho_{AB}$ from \eqref{eq:coherent_info_defn}:
    \begin{equation}\label{eq:coherent_info_defn_iso}
		  I\!\left(A\rangle B\right)_{\rho} \coloneqq H\!\left(\operatorname{Tr}_A\!\left[\rho_{AB}\right]\right) - H\!\left(\rho_{AB}\right).
	   \end{equation}
    The marginal of the isotropic state on either systems, $A$ or $B$, is equal to the maximally mixed state for all $F\in [0,1]$ and $d\ge 2$. Therefore,
    \begin{equation}\label{eq:ent_iso_marginal}
        H\!\left(\operatorname{Tr}_B\!\left[\zeta^{F,d}_{AB}\right]\right) = H\!\left(\frac{I_{A}}{d}\right) = \log_2 d.
    \end{equation}

    Now let us evaluate the von Neumann entropy of the isotropic state. Note that $\Phi^d_{AB}$ and $I_{AB}-\Phi^d_{AB}$ are orthogonal projectors. This implies the following equalities:
    \begin{align}
        H\!\left(\zeta^{F,d}_{AB}\right)
        &= -\operatorname{Tr}\!\left[\zeta^{F,d}_{AB}\log_2\zeta^{F,d}_{AB}\right]\\
        &= -\operatorname{Tr}\!\left[F\Phi^d_{AB}\log_2\!\left(F\Phi^d_{AB}\right)\right]- \operatorname{Tr}\!\left[\frac{1-F}{d^2-1}\left(I_{AB}-\Phi^d_{AB}\right)\log_2\!\left(\frac{1-F}{d^2-1}\left(I_{AB}-\Phi^d_{AB}\right)\right)\right]\\
        &= -F\log_2F\operatorname{Tr}\!\left[\Phi^d_{AB}\right] - \frac{1-F}{d^2-1}\log_2\!\left(\frac{1-F}{d^2-1}\right)\operatorname{Tr}\!\left[I_{AB}-\Phi^d_{AB}\right]\\
        &= -F\log_2 F - (1-F)\log_2\!\left(\frac{1-F}{d^2-1}\right)\\
        &= -F\log_2 F - (1-F)\log_2(1-F) + (1-F)\log_2\!\left(d^2-1\right)\\
        & = h_2(F) + (1-F)\log_2\!\left(d^2-1\right), \label{eq:ent_iso_st}
    \end{align}
    where we have used the property of projection operators mentioned in \eqref{eq:proj_op_log_id} to arrive at the third equality. Substituting \eqref{eq:ent_iso_marginal} and \eqref{eq:ent_iso_st} in the definition of coherent information, we arrive at \eqref{eq:coherent_info_iso}. This concludes the proof.
\end{proof}
	
\end{document}